\documentclass[12pt]{article}
% ---------- this is the document header aka preamble -------------------

% ------------- packages -----------------
\usepackage[usenames,dvipsnames]{color}
\usepackage[round]{natbib}
\usepackage[hyperfootnotes=false]{hyperref}
\usepackage{verbatim}    % for multi-line (only) comments environments of the form: \begin{comment} ... \end{comment}
                         % also allows insertion of literal code, and monotype fonts, using \verb|yMy text here|
\usepackage{caption}
\usepackage{threeparttable}
\usepackage{xcolor}
\usepackage{subcaption}
\usepackage{rotating}
\usepackage{hyperref}    % for inserting web-links into the PDF output.
\usepackage{enumerate}   % for auto-generating lists.
\usepackage{float}
\usepackage[final]{pdfpages}
\usepackage{ragged2e}
\usepackage{lscape}

\newcommand{\tabincell}[2]{\begin{tabular}{@{}#1@{}}#2\end{tabular}}

\usepackage{pifont}% http://ctan.org/pkg/pifont
\newcommand{\xmark}{\ding{55}}%

\usepackage{fullpage}%...if we don't want article format/wide margins...
 \usepackage[top=0.9in, bottom=0.9in, left=0.9in, right=0.9in]{geometry}

 \usepackage{setspace} %...if we want to force double spacing...
\usepackage{textcomp}
\usepackage{amsbsy}      % these are all AMS (American Mathematical Society) packages...
\usepackage{amssymb}
\usepackage{amsmath}
\usepackage{amsthm}      % this one is required for doing custom formatting of mathematical theorems... see below.

\usepackage[all]{xy}     % for drawing charts
\usepackage{color}       % for highlighting text with color.
\usepackage{graphicx}    % for inserting images

\usepackage{fancyhdr}         % for handling page headers and footers
\usepackage{lastpage}    % enables page X of Y functionality in fancyhdr
\usepackage{natbib}
\usepackage{multirow}
\usepackage{mathrsfs}
\usepackage{epsfig}
\usepackage{bbm}
\hypersetup{colorlinks,citecolor=Blue,linkcolor=Blue,urlcolor=Blue}

\usepackage{lscape}

\interfootnotelinepenalty=10000
%\usepackage{authblk}
%\usepackage[T1]{fontenc}
%\usepackage[utf8]{inputenc}
% ------------ end packages -----------------

%%%%%%%%%%%%%
%\usepackage{tgpagella}
%\usepackage{times}

\theoremstyle{plain} % default
\newtheorem{thm}{Theorem}
\newtheorem{asp}{Assumption}

\newtheorem{prop}{Proposition}
\newtheorem{coro}{Corollary}

%\fancyhead[R]{ \thetitle } % this is a right-hand side header
\fancyfoot{} % clear all footer fields
\fancyfoot[C]{ \thepage \ of \pageref*{LastPage} }      % \pageref* instead of \pageref to avoid linking behavior from \usepackage{hyperref}

% ------ end bits used in Fancy Header ------------

% ------- custom math commands ------------------------

% note the use of \dfrac instead of \frac... this stops fractions from appearing 'squashed' in an array or matrix environment.

% forces equations to be numbered by subsection (section and subsubsection are also options).
% this is not a custom command per se, but it does customize equation numbering.
% remember to place this in the document preamble to avoid the unusual behavior.

% -------- end custom commands ------------------

%\doublespacing     %Double spacing
\onehalfspacing
% -------------- end document header --------------------------
%

\begin{document}

\title{\large{\textbf{
%Estimating Responses to Sensitive Questions \\
Eliciting Information from Sensitive Survey Questions}%
}}
\author{Yonghong An\thanks{Department of Economics, Texas A\&M University. E-mail address: \href{yonghongan@tamu.edu}{yonghongan@tamu.edu}.} \and Pengfei Liu\thanks{Department of Environmental and Natural Resource Economics, University of Rhode Island. Email: \href{mailto:pengfei\_liu@uri.edu}{pengfei\_liu@uri.edu}.}}
\date{\today}
\maketitle

%latexdiff old.tex new.tex > diff.tex

\begin{singlespace}
\begin{abstract}
This paper considers how to elicit information from sensitive survey questions. First we thoroughly evaluate list experiments (LE), a leading method in the experimental literature on sensitive questions. Our empirical results demonstrate that the assumptions required to identify sensitive information in LE are violated for the majority of surveys. Next we propose
a novel survey method, called Multiple Response Technique (MRT), for eliciting information from sensitive questions. We require all of the respondents to answer three questions related to the sensitive information. This technique recovers sensitive information at a disaggregated level while still allowing arbitrary misreporting in survey responses. An application of the MRT provides novel empirical evidence on sexual orientation and Lesbian, Gay, Bisexual, and Transgender (LGBT)-related sentiment.
\end{abstract}

{\bf Keywords}: list experiments, measurement error, sensitive information, sensitive questions, survey data, LGBT\\

{\bf JEL}: D8, C1
\end{singlespace}

\break

\section{Introduction}
Surveys designed to address a wide range of social and economic issues---such as racial prejudice, drug use, illegal immigration, and fraud---often elicit information on private, illegal, or socially undesirable behaviors. However, respondents tend to misreport their views on sensitive topics, and that misreported information often leads to biased estimates of the behaviors, their determinants, and their causal effects.

Estimating sensitive information from potentially misreported data has important implications for social and economic policy decisions. Therefore, methods that produce unbiased estimates are prominent. The method of list experiments (LE), also known as the ``item count technique"  and ``unmatched count
technique", is one of the most popular methods used to reduce response bias in surveys. It was proposed in \cite{raghavarao1979block} to enhance respondents' willingness to answer truthfully by anonymizing the survey. In LE, a sample of respondents are randomly assigned to a control and a treatment group. The control group receives
a set of nonsensitive questions; the treatment group receives the same set of questions plus one sensitive question. The respondents in both groups only indicate the number of applicable items.
 \cite{coffman2017size} further develop LE by modifying the survey design: the control group is required to answer
the sensitive question directly, in addition to indicating the number of applicable nonsensitive items.

LE has been widely used in the past three decades in multiple disciplines, including political science, sociology, psychology, and statistics, and it has been applied increasingly in economics recently.\footnote{\cite{blair2018worry} summarize studies using LE in multiple disciplines.} When searched the keywords ``list experiments", ``item count technique", and ``unmatched count technique", we found 20 economics articles using the method after 2010, with 13 of them published after 2016. \textit{The American Economic Review} and \textit{Quarterly Journal of Economics}
had no publications using LE before 2016, but published six articles using it during 2016-2019.

Despite its wide applications, LE has never been assessed systematically for its validity. Thus the reliability of empirical results
based on the LE method is unclear. In this paper, we develop a method to evaluate LE and then show that the assumptions required to identify sensitive information in LE are violated for the majority of its empirical applications. In addition to a high likelihood of failure, LE, and other existing methods of eliciting sensitive information only provide a measure of sensitive information at the aggregate level. This limits the use of the aggregated measure as an outcome variable (\cite{bertrand2017field}). To address these issues, we propose a novel method, multiple response technique (MRT), to elicit sensitive information from survey questions. Our method is applicable to surveys with measurement error due to misreporting, regardless of the presence of sensitive information.

We begin by providing a rigorous framework for LE. First, we generalize LE by relaxing a ubiquitous assumption that respondents will answer survey questions truthfully. In our generalized model, respondents in the control and treatment groups may misreport, and may do so in different ways. Truthful response is nested as a special case of the generalized model where the misreporting probability equals zero. We show that in the generalized model, if respondents are randomly assigned into two groups and the sensitive item does not alter the preferences of respondents in the treatment group, then restrictions on the observed responses imposed by LE can be summarized as a set of moment conditions. The model's parameters, including the proportion of respondents with sensitive information and misreporting probabilities, can be estimated from the moment conditions through the generalized method of moments (GMM) if there are at least three nonsensitive questions.  We find that using mean difference (difference between means of applicable items in the two groups), the commonly used approach to estimating the proportion of respondents with sensitive information is generally biased in the presence of misreporting. The magnitude of the bias depends on the true parameters of the model, the number of nonsensitive questions, and misreporting probabilities.

Based on the moment conditions just described, we develop a procedure for testing the validity of the basic assumptions of LE.
 We apply our test to five recently published articles where the results depend partially on LE. For the majority of the empirical applications, the fundamental assumptions are violated. Thus, estimates based on LE through different methods--including mean difference, ordinary least squares (OLS), nonlinear least squares (NLS), and maximum likelihood estimation (MLE)--would be biased and the implications of those estimates, together with their determinants and causal effects, are problematic.
The problems with LE
could be due to
the significance of misreporting, the non-randomness of group assignment, or the impacts of the sensitive item on respondents' preference. However, in the literature it is unclear how we can make adjustments to LE to obtain unbiased estimates once its
fundamental assumptions fail.
%The LE can no longer be used to obtain unbiased estimates on sensitive information once the fundamental assumptions fail.

Therefore, we propose a novel method of eliciting information from sensitive survey questions.\footnote{It is worth noting that our method is also applicable in those settings where LE is not rejected. However, one should be cautious in comparing the results of the two methods because they are based on
different sets of assumptions.}
Respondents receive three or more survey questions related to the sensitive information. Random assignment to groups is not required. We
treat the sensitive information as respondents' \textit{unobserved heterogeneity} and recover it from their responses to multiple survey questions. Our method's validity is based on two assumptions: (1) after we control the set of observed (e.g., demographic) variables and the sensitive information, the respondent's responses to three survey questions will be independent; and (2) respondents with or without sensitive information will answer some survey questions differently. Given a sample of the survey responses and respondents' characteristics, we then propose two versions of estimation suitable for the different data patterns. When the covariates are discrete, it is convenient to estimate model primitives using a closed-form and nonparametric procedure. One prominent property of the nonparametric approach is that it is global and involves no numerical optimizations. When the covariates are continuous, we apply maximum likelihood estimation (MLE) to estimate the dependence of sensitive behaviors on the covariates. The resulting estimates allow us to predict the likelihood of sensitive behaviors for a given value of the covariates. Monte Carlo simulations show good performance of both estimating procedures with a modest sample size.

Our proposed method has several important advantages over LE and other approaches to eliciting sensitive information. First, we can recover sensitive information at a disaggregated level. For example, in a study of respondents' attitude toward
same-sex marriage, we are able to recover the proportions of supporters for both LGBT and non-LGBT populations, while the existing methods only provide an overall proportion. Second, our proposed method allows misreporting in an arbitrary form, and the measurement error due to that misreporting can be recovered. When a sample of respondents are asked about their sexuality in a survey question, we can estimate the measurement error of their responses.
And last but not least, the effects of group assignment on survey results are ruled out because respondents are not grouped.

Using our proposed method we estimate sexual orientation and LGBT-related sentiment.
The data were analyzed in \cite{coffman2017size}, where the main objective was to illustrate the substantially underestimated
size of the LGBT-population and the magnitude of anti-gay sentiment.
We apply our proposed technique to this data and our analysis leads to several novel findings that are  obscured with LE and other existing survey methods.

Ours are the first set of quantitative results in the literature on respondents' misreporting behavior when indicating their sexual orientation, and on how the dependence of misreporting behavior depends on demographics. A
substantial portion (about 28.8\%) of non-heterosexual respondents report themselves to be heterosexual, while heterosexual respondents report truthfully.  Male, Black, Christian, and Republican respondents report a much lower proportion of non-heterosexuality than their counterpart groups. For example,
47.3\% of non-heterosexual Republicans claim themselves to be heterosexual while the comparable percentage is only 25.6\% for Democrats.
These results are obtained without any \textit{ex ante} information on respondents' sexual orientation.

Respondents with negative sentiments to the LGBT population are significantly divided. Black, Christian, and Republican respondents are substantially more negative toward the LGBT population than their counterparts: for example, 22\% of Christian respondents have negative sentiments, while the proportion is only 3.2\% for non-religious respondents. We also find significant divergence on
how respondents with negative sentiments toward the LGBT population respond to the three sentiment-related questions. ``Happy with LGB manager"
is accepted by only 2\% of the respondents, while 17.8\% of them support same-sex marriage and 58.5\% of them ``believe it is illegal to discriminate against LGBT people".   These estimates exhibit sharp divides among demographic groups, and the attitude toward same-sex marriage is the most diverse for the three questions: the proportion of white respondents who support same-sex marriage is five times that of Black respondents; for nonreligious respondents it is three times that of Christian; and for Democrats is five times that of Republicans.

This paper contributes to a broader literature on surveys and related empirical studies. First of all, we develop a rigorous procedure
 to check whether LE is applicable to the survey data. There are now two strands of literature on the applications of LE. The first focuses on estimating the proportion of sensitive behavior, its determinants and consequences. The applications cover a wide range of topics in economics, including anti-authoritarian
movements (\cite{cantoni2019protests}); criminal behavior (\cite{kuha2014item}); the effects of labor exclusion on responsibilities (\cite{ronconi2015labor}); election (\cite{neggers2018enfranchising} and \cite{barrera2020facts}); evaluation of anti-poverty programs (\cite{haushofer2016short} and \cite{muralidharan2016building});
impacts of media censorship (\cite{chen2019impact});  influence of exporting
democracy (\cite{humphreys2019exporting}); microfinance (\cite{karlan2012list} and \cite{karlan2016follow}); relationship between financial conditions and performance of firms (\cite{bilir2019host}); and sexual health (\cite{chong2013effectiveness} and \cite{treibich2019estimating}).
The second strand relies on the modified LE in \cite{coffman2017size} to justify the reliability of survey data. If the proportion of affirmative answers to the sensitive question in the control group is the same as the mean difference, then misreporting does not exist or is negligible, e.g., see
\cite{cantoni2019protests} and \cite{chen2019impact} for applications. All of these applications assume that the basic assumptions of LE hold, without checking their validity.  To the best of our knowledge, we are the first to empirically evaluate the validity of the assumptions in the LE approach. Our negative findings suggest that researchers should be cautious when applying LE and interpreting empirical results based on LE.

Second, although our proposed technique is motivated by the potential failure of LE, the applicability of our MRT technique is more general. Our technique has its flexibility to address various types of measurement errors due to misreporting.
Most of the empirical studies on experimental data are silent on measurement error despite its universality
in survey data (\cite{bound2001measurement}).
Notable exceptions are
\cite{blattman2016measuring} and  \cite{blattman2017reducing}: they develop a validation technique
 to estimate measurement error and apply it to study the impacts of
behavioral therapy on crime and violence. Nevertheless, their validation approach is based on the availability
of a subsample without measurement error, and that is obtained through in-depth participation observations. This type of validation method is powerful when analyzing data with measurement error (e.g., see similar studies in \cite{bollinger1998measurement}, \cite{bound2001measurement}, and \cite{chen2005measurement}), but its applicability is limited because a validation sample without measurement error is rare. By contrast, our technique allows measurement error in flexbile forms and thus opens a new avenue for analysis of experiment data with misreporting.

Our paper also contributes to the literature on measurement error. Our technique is based on a recently developed methodology for measurement error in \cite{hu2008identification}. The method has been used widely to recover unobserved heterogeneity and unobservables in industrial organization and labor economics: (\cite{hu2017econometrics} provides an excellent review of applications of this method). For example, \cite{feng2013misclassification} apply this method to correct
U.S. unemployment rates by addressing misclassification in self-reported labor force status.  Nevertheless, we are the first to
treat sensitive information as the unobserved heterogeneity of respondents and to recover it using the methodology. Our technique serves as a first attempt to connect the literature on measurement error models and experimental data based on survey methods, and sheds light on the importance of communication between the two branches  of the literature.

The remainder of the paper is organized as follows.
In Section 2, we analyze LE and evaluate the validity of LE assumptions in existing empirical studies. In Section 3,
we present our new technique and demonstrate its performance. Section 4 reports our empirical findings
on sexuality and LGBT-related sentiments. Section 5 concludes. Proofs, simulation results, and additional discussions of LE are in the Appendix.

\section{List Experiments\label{section: LE}}

\subsection{The method and properties\label{subsection: setup and properties}}
There is a population of $n$ respondents with characteristics summarized by a vector $Z$. The respondents are randomly assigned to a control group and a treatment group. There are $J$
 non-sensitive yes-no questions for the control group and $J+1$ yes-no questions for the treatment group: the same $J$ non-sensitive questions plus one sensitive question.  Let $t\in\{0,1 \}$ be a binary indicator of the randomized treatment assignment; $t=0$ and $t=1$ indicate that a respondent is assigned to the control group or the treatment group.
We use $Y_t\in \mathcal{J}_t\equiv \{0, 1, \cdots, J, J+t\}$ to denote the observed outcome (yes-no response) in group $t$ and $P_t(j)\equiv \Pr(Y_t=j)$ is the probability of outcome $j$ being observed in group $t$.

Let $R_0 \in\mathcal{J}_0$ and $R_1 \in\mathcal{J}_0$ be the random variables that describe respondents' responses to the non-sensitive questions in the control and treatment group, respectively. $X^*\in\{0,1\}$ is a random variable that reflects respondents' true preference regarding the sensitive question. Let $\Pr(X^*=1)\equiv \delta$ be the proportion of respondents who answer the sensitive question affirmatively under their true preference. A ubiquitous assumption in the literature is that adding the sensitive question does not change responses to the nonsensitive questions in the treatment group, formalized by the assumption below.

\begin{asp}\label{asp: random assignment}
Under the true preference, the presence of a sensitive question does not change the distribution of respondents' responses to the nonsensitive questions, i.e, $\Pr(R_0=j)=\Pr(R_1=j)$, $\forall j\in\mathcal{J}_0$.
\end{asp}
There are two restrictions imposed by Assumption \ref{asp: random assignment}.
First, assignment to the control and treatment groups is random, i.e., the respondents in the two groups have the same preference regarding the nonsensitive questions. Second, the addition of a sensitive question has no impact on respondents' true preference as to the non-sensitive questions. Assumption \ref{asp: random assignment} can be relaxed to a weaker version: the true preferences of respondents' to the nonsensitive questions are the same conditional on the vector of characteristics $Z$, i.e., $\Pr(R_0=j|Z=z)=\Pr(R_1=j|Z=z)$. Our analysis of LE can be extended to the case where Assumption \ref{asp: random assignment} holds conditional on $Z$.
Various versions of Assumption \ref{asp: random assignment} are imposed but not explicitly stated in most LE studies. One exception is \cite{blair2012statistical}, where a stronger version of the assumption is used: every respondent's response to the nonsensitive questions would be the same in the two groups.

When respondents reveal their preference truthfully in answering the survey questions, we have \begin{eqnarray}\label{equation: true preference}\Pr(R_0=j)=P_0(j), j\in\mathcal{J}_0; \hspace{0.2cm}\Pr(R_1+X^*=j)=P_1(j), j\in\mathcal{J}_1.\end{eqnarray} However, misreporting is common in self-reported surveys \citep{bound2001measurement}, even when questions are non-sensitive. We show that equation (\ref{equation: true preference}) no longer holds under misreporting. Let $p_t\in[0,1)$ be the probability of misreporting in the group $t$.
%\textit{We need some references here to justify such an assumption. Should we discuss strategic misreporting, i.e., response to sensitive questions depends on whether the sensitive information can be recovered by the number of yes or no responses?}
\begin{asp}\label{asp: misreporting probability}
Respondents in group $t\in\{0,1\}$ truthfully respond with probability $1-p_t$ and misreport with probability $p_t$. Respondents choose each possible response with equal probability when they misreport.
\end{asp}
This assumption is motivated by empirical findings on validation studies in the literature of measurement error (e.g., \cite{bollinger1998measurement}, \cite{bound2001measurement}, and \cite{chen2005measurement}) that survey respondents both report truthfully and misreport, intentionally or unintentionally, with positive probabilities. In Assumption \ref{asp: misreporting probability} we assume that respondents misreport unintentionally because of their inattention or lack of effort. Thus, a natural specification is that each possible outcome is chosen with equal probability. Note that we still allow the sensitive question to affect respondents' misreporting behavior; i.e.,
the misreporting probabilities differ in the two groups, $p_0\neq p_1$.\footnote{\cite{blair2019list} introduce a ``uniform error" to LE models by assuming $p_0=p_1$.}

Our analysis of LE can be readily extended to a model with other types of misreporting errors given that misreporting strategies are known. In Appendix \ref{appendix: an alternative misreporting strategy}, we discuss one case of intentional or strategic misreporting in which respondents misreport only if their truthful responses disclose privacy, i.e.,  when their truthful response is the outcome $J+1$.

Under Assumptions \ref{asp: random assignment} and \ref{asp: misreporting probability}, we establish in the following proposition the connection between the distributions of responses in the two groups, allowing for misreporting behaviors.
\begin{prop}\label{proposition: model restrictions under randomization}
Under Assumptions \ref{asp: random assignment} and \ref{asp: misreporting probability}, the observed distribution of  responses in the two groups satisfies the following restrictions,
\begin{eqnarray}\label{equation: relationship between observed responses with misreporting}
P_1(0)&=&\frac{1-p_1}{1-p_0}\bigg((1-\delta)P_0(0)-\frac{(1-\delta)p_0}{J+1}\bigg)+\frac{p_1}{J+2},\nonumber\\
P_1(j)&=&\frac{1-p_1}{1-p_0}\left(\delta P_0(j-1)+(1-\delta)P_0(j)-\frac{p_0}{J+1}\right)+\frac{p_1}{J+2},\hspace{0.1cm}j=1,2,\dots,J,\nonumber\\
P_1(J+1)&=&\frac{1-p_1}{1-p_0}\bigg(\delta P_0(J)-\frac{\delta p_0}{J+1}\bigg)+\frac{p_1}{J+2},
\end{eqnarray}
where $0\leq P_t(j)\leq 1$,\hspace{0.2cm} $\sum_{j=0}^{J+t} P_t(j)=1$,  $t=0, 1$.
\end{prop}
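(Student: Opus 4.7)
The proof should be a direct construction: I will express both observed distributions $P_0$ and $P_1$ in terms of the ``truthful'' latent distributions using Assumption~\ref{asp: misreporting probability}, use Assumption~\ref{asp: random assignment} to tie the two treatment arms together through the common distribution of $R_0$, and then algebraically eliminate the latent distribution to obtain the stated identities.

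\textbf{Step 1 (Mixture representation from Assumption~\ref{asp: misreporting probability}).} Under the assumed misreporting mechanism, the observed response in the control group equals the truthful response with probability $1-p_0$ and is drawn uniformly on $\mathcal{J}_0 = \{0,1,\dots,J\}$ (which has $J+1$ elements) with probability $p_0$. Thus
\begin{equation*}
P_0(j) = (1-p_0)\Pr(R_0=j) + \frac{p_0}{J+1}, \qquad j \in \mathcal{J}_0.
\end{equation*}
Solving for the latent probability yields $\Pr(R_0=j) = \frac{1}{1-p_0}\left(P_0(j) - \frac{p_0}{J+1}\right)$. An analogous decomposition in the treatment group, whose outcome space has $J+2$ elements, gives
\begin{equation*}
P_1(j) = (1-p_1)\Pr(R_1 + X^* = j) + \frac{p_1}{J+2}, \qquad j \in \mathcal{J}_1.
\end{equation*}

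\textbf{Step 2 (Convolution with the sensitive indicator).} Treating $X^*$ as independent of $R_1$ (as is implicit in the list-experiment construction, since the sensitive item is added to a fixed nonsensitive list), I would compute the distribution of the sum $R_1 + X^*$ by case: $\Pr(R_1 + X^* = 0) = (1-\delta)\Pr(R_1 = 0)$; $\Pr(R_1 + X^* = j) = \delta\Pr(R_1 = j-1) + (1-\delta)\Pr(R_1 = j)$ for $1 \le j \le J$; and $\Pr(R_1 + X^* = J+1) = \delta \Pr(R_1 = J)$.

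\textbf{Step 3 (Tie the groups via Assumption~\ref{asp: random assignment} and substitute).} Assumption~\ref{asp: random assignment} gives $\Pr(R_1 = j) = \Pr(R_0 = j)$ for all $j \in \mathcal{J}_0$, so I can replace each $\Pr(R_1 = k)$ in the convolution by the expression for $\Pr(R_0 = k)$ obtained in Step~1. Plugging these into the mixture representation of $P_1(j)$ from Step~1 and collecting the $\frac{1-p_1}{1-p_0}$ factor produces exactly the three displayed formulas for $P_1(0)$, $P_1(j)$ with $1 \le j \le J$, and $P_1(J+1)$. The normalization $\sum_{j=0}^{J+t} P_t(j) = 1$ and the range constraints $0 \le P_t(j) \le 1$ follow immediately since the $P_t$ are genuine probability mass functions on $\mathcal{J}_t$.

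\textbf{Expected obstacle.} The argument is essentially bookkeeping rather than deep, so the only real subtlety is making sure the boundary cases $j=0$ and $j=J+1$ are handled correctly (they pick up only one of the two convolution terms, which is why the corresponding formulas carry a lone $(1-\delta)$ or $\delta$ factor in front of the $p_0/(J+1)$ correction). A secondary point worth flagging explicitly in the write-up is the independence of $X^*$ from $R_1$; this is not stated as a separate assumption but is implicit in the construction, and one should note that it---together with Assumption~\ref{asp: random assignment} pinning down the marginal of $R_1$---is what allows the convolution in Step~2 to be written in closed form.
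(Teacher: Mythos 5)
Your proposal is correct and follows essentially the same route as the paper's proof: invert the uniform-misreporting mixture in the control group to recover the latent distribution $P(j)$, form the convolution of $R_1$ with the independent Bernoulli $X^*$, and substitute using Assumption~\ref{asp: random assignment}. Your explicit flagging of the implicit independence of $X^*$ and $R_1$ is a point the paper also uses silently in writing $\Pr(R_1=r_1)\Pr(X^*=j-r_1)$, so nothing is missing.
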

\begin{proof}
See Appendix \ref{appendix_section:proof}.
\end{proof}
Equation (\ref{equation: relationship between observed responses with misreporting}) summarizes all of the restrictions that LE imposes on the observed responses under Assumptions \ref{asp: random assignment} and \ref{asp: misreporting probability}. Equation (\ref{equation: relationship between observed responses with misreporting}) also nests the special cases where the probabilities of misreporting in the two groups are the same ($p_0=p_1$) and there is no misreporting ($p_0=p_1=0$). Proposition \ref{proposition: model restrictions under randomization} can be extended to alternative, non-random misreporting strategies. Equation (\ref{equation: relationship between observed responses with misreporting}) will be modified accordingly, based on the alternative misreporting strategy, to reflect the connection between the distribution of responses in the two groups.

One widely used approach to estimating the probability of sensitive information is to compute the
 difference between the mean of responses in the two groups.\footnote{\cite{imai2011multivariate} and \cite{imai2015using} propose to estimate LE with multiple covariates by NLS and MLE, respectively. Both approaches still rely on Assumptions \ref{asp: random assignment} and \ref{asp: misreporting probability}.} The corollary below shows that the mean difference approach may lead to a biased estimate in the presence of misreporting.\footnote{Without using the relationship in equation (\ref{equation: relationship between observed responses with misreporting}), \cite{blair2019list} show that when $p_0=p_1\neq 0$, the mean difference is a biased estimator of $\delta$.}
\begin{coro} \label{corollary: mean difference with misreporting}
Suppose Assumptions \ref{asp: random assignment} and \ref{asp: misreporting probability} hold, the mean difference of responses in the two groups
is
\begin{eqnarray}\label{equation: mean difference of responses with misreporting}
\mathbb{E}(Y_1)-\mathbb{E}(Y_0)&\equiv&\sum\nolimits_{j=0}^{J+1}P_1(j)j-\sum\nolimits_{j=0}^JP_0(j)j\nonumber\\
&=&\delta-p_1\delta-\frac{J(1-p_1)p_0}{2(1-p_0)}-\frac{p_1-p_0}{1-p_0}\mathbb{E}(Y_0)+\frac{J+1}{2}p_1.
\end{eqnarray}
When $p_0=p_1=p\neq 0$, the mean difference is
$\mathbb{E}(Y_1)-\mathbb{E}(Y_0)=\delta+p(1-2\delta)/2$.
\end{coro}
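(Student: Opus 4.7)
The plan is direct substitution: plug the formulas from Proposition~\ref{proposition: model restrictions under randomization} into $\mathbb{E}(Y_1)=\sum_{j=0}^{J+1} jP_1(j)$ and simplify. I would decompose each $P_1(j)$ into four pieces, a shift term $\tfrac{(1-p_1)\delta}{1-p_0}P_0(j-1)$, an identity term $\tfrac{(1-p_1)(1-\delta)}{1-p_0}P_0(j)$, a uniform-noise correction of the form $-\tfrac{(1-p_1)}{1-p_0}\cdot(\text{const}/(J+1))$, and the flat misreporting floor $p_1/(J+2)$, and deal with each piece separately. Because the $j=0$ term vanishes, only the formulas for $1\le j\le J$ and $j=J+1$ matter.

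The key reindexing is in the shift term. First I reindex $\sum_{j=1}^{J} jP_0(j-1)=\sum_{i=0}^{J-1}(i+1)P_0(i)$ and then combine with the $(J+1)P_0(J)$ coming from $P_1(J+1)$; using $\sum_{i=0}^{J}P_0(i)=1$ the boundary $P_0(J)$ terms cancel and the bracket collapses to $\mathbb{E}(Y_0)+1$. The identity term immediately gives $\mathbb{E}(Y_0)$ since the $j=0$ contribution is zero. For the correction, I use $\sum_{j=1}^{J} j=J(J+1)/2$ on the $j\le J$ part and the lone $-\delta p_0$ from $j=J+1$ to obtain $-\tfrac{(1-p_1)p_0 J}{2(1-p_0)}-\tfrac{(1-p_1)\delta p_0}{1-p_0}$. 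The flat noise contributes $\tfrac{p_1}{J+2}\cdot\tfrac{(J+1)(J+2)}{2}=\tfrac{p_1(J+1)}{2}$.

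Collecting the pieces,
\begin{equation*}
\mathbb{E}(Y_1)=\tfrac{1-p_1}{1-p_0}\mathbb{E}(Y_0)+\tfrac{(1-p_1)\delta}{1-p_0}-\tfrac{(1-p_1)\delta p_0}{1-p_0}-\tfrac{(1-p_1)Jp_0}{2(1-p_0)}+\tfrac{p_1(J+1)}{2},
\end{equation*}
and the two $\delta$-terms telescope as $\tfrac{(1-p_1)\delta(1-p_0)}{1-p_0}=(1-p_1)\delta$. Subtracting $\mathbb{E}(Y_0)$ and using $\tfrac{1-p_1}{1-p_0}-1=-\tfrac{p_1-p_0}{1-p_0}$ yields exactly~\eqref{equation: mean difference of responses with misreporting}. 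Setting $p_0=p_1=p$ kills the $\mathbb{E}(Y_0)$ coefficient, reducing the residual to $\delta(1-p)-Jp/2+(J+1)p/2=\delta+p(1-2\delta)/2$.

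The only real obstacle is bookkeeping: the shift-plus-boundary cancellation that turns the bracket into $\mathbb{E}(Y_0)+1$, and the pairing of $\tfrac{(1-p_1)\delta}{1-p_0}$ against $-\tfrac{(1-p_1)\delta p_0}{1-p_0}$ to strip the $(1-p_0)$ in the denominator. Everything else is arithmetic on the two triangular sums $\sum_{j=1}^{J}j$ and $\sum_{j=1}^{J+1}j$.
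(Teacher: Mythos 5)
Your proof is correct and follows essentially the same route as the paper's: direct substitution of the Proposition~\ref{proposition: model restrictions under randomization} formulas into $\sum_j jP_1(j)$, a reindexing of the shifted sum so that the boundary $P_0(J)$ terms cancel and $\sum_j P_0(j)=1$ absorbs the $\delta$ contribution, the triangular sums for the uniform-noise pieces, and then subtraction of $\mathbb{E}(Y_0)$. The only cosmetic difference is that the paper groups the $P_0(j)$ coefficients as $(j+\delta)P_0(j)$ rather than splitting into your ``shift'' and ``identity'' terms; the algebra is identical.
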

\begin{proof}
See Appendix \ref{appendix_section:proof}.
\end{proof}
This corollary states that the mean difference is determined by misreporting probabilities $p_0$ and $p_1$, the mean of responses in the control group $\mathbb{E}(Y_0)$, and the number of nonsensitive questions $J$.
The mean difference does not identify the parameter $\delta$ unless (1) there is no misreporting, i.e., $p_0=p_1=0$ or (2) there is no misreporting in the treatment group ($p_1=0$) and the mean response in the control group is $J/2$, i.e., $\mathbb{E}(Y_0)=J/2$. A sufficient condition of $\mathbb{E}(Y_0)=J/2$ is that respondents choose each outcome from $\{0, 1, \cdots, J\}$ with equal probability. \cite{coffman2017size} show that, in most current surveys, misreporting on stigmatized opinions is inevitable. Moreover, when the probabilities of misreporting $p_0$ and $p_1$ are unknown, the approach to estimating $\delta$ by the mean difference is not reliable. The bias of a mean difference estimate could be in either direction and is unknown \textit{ex ante}.

The result in Corollary \ref{corollary: mean difference with misreporting} reconciles the results of
LE's dependence on the number of non-sensitive questions. \cite{gosen2014social} shows that the results of LE depend on the number of non-sensitive questions, while \cite{tsuchiya2007study} find that LE does not rely on the number of non-sensitive questions.  We show that the mean difference generally does rely on the number of nonsensitive questions unless respondents reveal their preferences truthfully.

Equation (\ref{equation: relationship between observed responses with misreporting}) provides $J+2$ conditions for three unknown parameters $(\delta, p_0, p_1)$ with one of the conditions being redundant. As a result, when there are at least three non-sensitive questions ($J\geq 3$), $\delta, p_0$, and  $p_1$ are identified from equation (\ref{equation: relationship between observed responses with misreporting}) and GMM can be applied to estimate the parameters under misreporting. We present the proof of identification in Appendix \ref{appendix_section:proof}.

\subsection{A Test of List Experiments\label{subsection: test misreporting behavior}}
Assumptions \ref{asp: random assignment} and \ref{asp: misreporting probability} are sufficient conditions for
equation (\ref{equation: relationship between observed responses with misreporting}), which summarizes the model restrictions and serves as a basis for identification of the model parameters $\delta$, $p_0$, and $p_1$ through various estimation methods: OLS, NLS, MLE, or GMM.
Before estimation, we need to test whether the data satisfies the restrictions in
equation (\ref{equation: relationship between observed responses with misreporting}), i.e., whether there is a unique solution to equation (\ref{equation: relationship between observed responses with misreporting}) for the given dataset. A rejection of equation (\ref{equation: relationship between observed responses with misreporting}) implies that there does not exist a triplet $(\delta, p_0, p_1)$ that satisfies the equation and that at least one of the two assumptions is violated. In this case, LE cannot be used to estimate sensitive information. Below we propose a procedure to test whether the data is consistent with the restrictions imposed by equation (\ref{equation: relationship between observed responses with misreporting}). We focus on the model with more than two non-sensitive questions ($J\geq 3$) since parameters cannot be identified from equation (\ref{equation: relationship between observed responses with misreporting}) when $1\leq J\leq 2$  .

Let $D_{i}=(Y_{i}, Z_{i}, t_i), i=1, 2, \cdots, n$ be an i.i.d. sample of $(Y, Z, t)$, where $Y$, $Z$, and $t$ are respondents' responses, characteristics, and a group indicator, respectively. Let $\theta\equiv(\delta, p_0, p_1)\in [0,1]^3\equiv \Omega$, Proposition \ref{proposition: model restrictions under randomization} states that any $\theta\in  \Omega$ that satisfies equation (\ref{equation: relationship between observed responses with misreporting}) is consistent with the data $\{D_i\}_{i=1}^n$. Let $\Theta$ be the identified set of the parameter that rationalizes the data.  When $J\geq 3$, equation (\ref{equation: relationship between observed responses with misreporting}) has at most one solution: the intersection of $\Theta$ and $\Omega$ is a singleton or empty set. If $\Theta\cap\Omega\neq\emptyset$, we may estimate $\theta$ from equation (\ref{equation: relationship between observed responses with misreporting}) by GMM. If $\Theta\cap\Omega=\emptyset$, any estimation method based on equation (\ref{equation: relationship between observed responses with misreporting}) is invalid. We propose a $J$-test to conduct inference of the existence of a solution to equation (\ref{equation: relationship between observed responses with misreporting}).
\begin{eqnarray}
H_0:  \Theta\cap\Omega\neq \emptyset\hspace{0.5cm} \hbox{vs.} \hspace{0.5cm}
 H_1:  \Theta\cap\Omega=\emptyset.
\end{eqnarray}
Under the null hypothesis, equation (\ref{equation: relationship between observed responses with misreporting}) sustains a unique solution $\theta^*\in\Omega$. We can rewrite equation (\ref{equation: relationship between observed responses with misreporting}) as $J+2$ moment conditions. One of the moment conditions is  redundant for the purpose of estimation because the left hand side of the equations sums to 1. Thus, we have
\begin{eqnarray}\label{eq: vector of moment conditions}
\mathbb{E}[\psi(D, \theta^*)]=0,
\end{eqnarray}
where $\psi\equiv(\psi_0\hspace{0.2cm}\psi_1\hspace{0.2cm}\cdots \psi_{J})^{\prime}$.
Let $c_0\equiv \Pr(t=0)$ and $c_1\equiv\Pr(t=1)$.
The conditions $\psi_j(\cdot, \cdot)$ are defined as follows.
\begin{eqnarray}\label{eq: J+2 population moment conditions}
\psi_0(D;\theta)&=&(1-\delta)\frac{1-p_1}{1-p_0}\left(\frac{(1-t)}{c_0}\cdot
\mathbbm{1}(Y=0)-\frac{p_0}{J+1}\right)-\frac{t}{c_1}\cdot
\mathbbm{1}(Y=0)+\frac{p_1}{J+2},\nonumber\\
\psi_j(D; \theta)&=&\frac{(1-p_1)}{1-p_0}\left(\frac{\delta(1-t)}{c_0}\cdot
\mathbbm{1}(Y=j-1)+\frac{(1-\delta)(1-t)}{c_0}\cdot
\mathbbm{1}(Y=j)-\frac{p_0}{J+1}\right)\nonumber\\
&-&\frac{t}{c_1}\cdot
\mathbbm{1}(Y=j)+\frac{p_1}{J+2}, j=1, 2, \cdots, J,\nonumber\\
\psi_{J+1}(D, \theta)&=&\frac{\delta(1-p_1)}{1-p_0}\left(\frac{(1-t)}{c_0}\cdot
\mathbbm{1}(Y=J)-\frac{p_0}{J+1}\right)-\frac{t}{c_1}\cdot
\mathbbm{1}(Y=J+1)+\frac{p_1}{J+2}.
\end{eqnarray}
After dropping one redundant moment, the GMM estimator of $\theta^*$ is
\begin{eqnarray}
\hat{\theta}&=&\arg\min\nolimits_{\theta\in \Omega}\bar{\psi}^{\prime}(D;\theta)\widehat{W}\bar{\psi}(D;\theta),
\end{eqnarray}
where $\widehat{W}$ is the optimal weighting matrix and $\bar{\psi}$ is the sample analog of $\psi$. For example, the sample analogy of $\bar{\psi}_0$ is
\begin{eqnarray*}
\bar{\psi}_0(D;\theta)=(1-\delta)\frac{1-p_1}{1-p_0}\left(\frac{1}{n}\sum_{i=1}^n\frac{(1-t_i)}{\hat{c}_0}\cdot
\mathbbm{1}(Y_i=0)-\frac{p_0}{J+1}\right)-\frac{1}{n}\sum_{i=1}^n\frac{t_i}{\hat{c}_1}\cdot
\mathbbm{1}(Y_i=0)+\frac{p_1}{J+2},
\end{eqnarray*}
where $\hat{c}_0=\sum_{i=1}^n\mathbbm{1}(t_i=0)/n$ and $\hat{c}_1=\sum_{i=1}^n\mathbbm{1}(t_i=1)/n$.

Let $T_n$ be the $J$-statistic
\begin{eqnarray}
T_n\equiv n\cdot \left(\bar{\psi}^{\prime}(D;\hat{\theta})\widehat{W}\bar{\psi}(D;\hat{\theta})\right)\xrightarrow{{\enskip d \enskip}}\chi^2(J-2).
\end{eqnarray}
Note that our test nests the case where there is no misreporting in at least one group.
A rejection of the null hypothesis implies that at least one of the Assumptions \ref{asp: random assignment} and \ref{asp: misreporting probability} is violated, although we cannot distinguish which assumption is violated. If Assumption \ref{asp: random assignment} is rejected, then the randomization assignment is not successful, or the respondents in the two groups have different preferences over the non-sensitive questions, or both. If Assumption \ref{asp: misreporting probability} is rejected, then we mis-specify the respondents' misreporting strategies. When the null hypothesis is rejected for LE, the mean difference, OLS, NLS, MLE, and GMM would provide biased estimates.

Our estimation and test procedures based on the unconditional moment conditions can be readily applied to conditional moment conditions. When Assumption \ref{asp: random assignment} only holds conditional on the vector of respondents' characteristics $Z$, the moment conditions in equation (\ref{eq: vector of moment conditions}) can be written as $\mathbb{E}[\psi(D, \theta^*)|Z=z]=0$. The estimation and testing procedures are then based on the unconditional moments $\mathbb{E}[\psi(D, \theta^*)\cdot z]=0$. Furthermore, model parameters $\delta, p_0$, and $p_1$ can also depend on $Z$ and testing can be conducted following  existing methods (e.g., \citealt{andrews2017inference}). %{\em do have a name for the method?}

\subsection{A modified version of LE\label{section: modified LE}}
The modified version of LE proposed in
\cite{coffman2017size} is often used to justify the reliability (no misreporting or measurement error) of the survey data. The main idea is
to compare the direct responses to the sensitive question to the result derived from LE (for example, see \cite{cantoni2019protests} and \cite{chen2019impact}). Under the modified LE, we use $X\in\{0,1\}$ to denote the control group's direct response to the sensitive question. The probability $\Pr(X=1)$ represents the observed probability of the direct response to the sensitive question. The widely used justification for truthful reporting is based on the condition
\begin{eqnarray}\label{equation: justification of no misreporting}
\Pr(X=1)=\mathbb{E}(Y_1)-\mathbb{E}(Y_0).
\end{eqnarray}
The justification relies on the claim below.

\smallskip
\noindent \textbf{Claim:} \textit{If
the condition in equation (\ref{equation: justification of no misreporting})
holds, respondents reveal their true preference in answering the survey questions.}
\bigskip

To simplify our discussion, we assume that the misreporting behavior in LE is the same for the control and treatment groups, $p_1=p_0=p$. Corollary \ref{corollary: mean difference with misreporting} implies that $\mathbb{E}(Y_1)-\mathbb{E}(Y_0)=\delta+p(1-2\delta)/2$ under Assumptions \ref{asp: random assignment} and \ref{asp: misreporting probability}. When misreporting exists in respondents' answers to the sensitive question, we use
$q_1\equiv \Pr(X=0|X^*=1)$ and $q_0\equiv\Pr(X=1|X^*=0)$ to denote the reporting errors for respondents with and without the sensitive information, respectively, where $0\leq q_1, q_0\leq 1$. Note that we still allow
respondents to answer the sensitive question truthfully by setting $q_1=0$ and $q_0=0$. The probability of an affirmative answer to the sensitive question in the control group is
\begin{eqnarray*}
\Pr(X=1)=\sum\nolimits_{i\in\{0,1\}}\Pr(X=1|X^*=i)\Pr(X^*=i)=(1-q_1)\delta+q_0(1-\delta),
\end{eqnarray*}
and equation (\ref{equation: justification of no misreporting}) is then equivalent to
\begin{eqnarray}\label{equation: criterion of modified LE}
(1-q_1)\delta+q_0(1-\delta)=\delta+p(1-2\delta)/2.
\end{eqnarray}
We summarize this result in the following corollary of Proposition \ref{proposition: model restrictions under randomization}.
\begin{coro} \label{corollary: modified LE}
Suppose Assumptions \ref{asp: random assignment} and \ref{asp: misreporting probability} hold, the condition $\Pr(X=1)=\mathbb{E}(Y_1)-\mathbb{E}(Y_0)$ is equivalent to $(1-q_1)\delta+q_0(1-\delta)=\delta+p(1-2\delta)/2$.
 \end{coro}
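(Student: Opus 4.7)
The plan is to evaluate the two sides of the claimed equivalence independently and show they match after a short algebraic identification. The right-hand side, $\mathbb{E}(Y_1)-\mathbb{E}(Y_0)$, is already pinned down by Corollary \ref{corollary: mean difference with misreporting} in the symmetric-misreporting case; the left-hand side, $\Pr(X=1)$, is to be decomposed by the law of total probability using the direct-question error rates $q_0$ and $q_1$.

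First I would invoke the second statement of Corollary \ref{corollary: mean difference with misreporting}: under Assumptions \ref{asp: random assignment} and \ref{asp: misreporting probability}, with $p_0=p_1=p$, one has $\mathbb{E}(Y_1)-\mathbb{E}(Y_0)=\delta+p(1-2\delta)/2$. Next I would condition on the latent truthful preference $X^*\in\{0,1\}$ and use the definitions $q_1\equiv\Pr(X=0\mid X^*=1)$ and $q_0\equiv\Pr(X=1\mid X^*=0)$, together with $\Pr(X^*=1)=\delta$, to obtain
\begin{align*}
\Pr(X=1) &= \Pr(X=1\mid X^*=1)\Pr(X^*=1)+\Pr(X=1\mid X^*=0)\Pr(X^*=0)\\
&= (1-q_1)\delta + q_0(1-\delta).
\end{align*}
Equating these two expressions yields $(1-q_1)\delta+q_0(1-\delta)=\delta+p(1-2\delta)/2$, and each step is an equality, so the equivalence runs in both directions.

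The only subtlety, and the point I would be most careful about when writing the proof out formally, is keeping the two distinct misreporting mechanisms conceptually separate: the parameter $p$ governs the \emph{uniform} random misreporting on the full list support $\mathcal{J}_t$ in Assumption \ref{asp: misreporting probability}, while $(q_0,q_1)$ are the type-specific error rates for the binary direct question $X$ on the control group. Since these act on different outcome spaces, no restriction linking $p$ to $(q_0,q_1)$ is imposed \emph{ex ante}, and the corollary simply records the single scalar equation that this linkage must satisfy whenever the empirical criterion $\Pr(X=1)=\mathbb{E}(Y_1)-\mathbb{E}(Y_0)$ holds. Beyond this bookkeeping, no analytical obstacle arises; the proof is a one-line computation followed by substitution into Corollary \ref{corollary: mean difference with misreporting}.
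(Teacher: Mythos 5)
Your proposal is correct and follows the paper's own argument exactly: it invokes the $p_0=p_1=p$ case of Corollary \ref{corollary: mean difference with misreporting} for the mean difference, decomposes $\Pr(X=1)$ by the law of total probability over $X^*$ using the definitions of $q_0$ and $q_1$, and equates the two expressions. No differences worth noting.
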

Corollary \ref{corollary: modified LE} implies that truthful reporting is \textit{a sufficient} rather than a necessary condition for the testable condition in equation (\ref{equation: justification of no misreporting}). Without reporting errors, we have $(q_1,q_0, p)=(0,0,0)$, both the direct responses to the sensitive question and the mean difference identify the probability of sensitive information $\delta$. However, for a given $\delta$, there are infinitely many triplets $(q_1,q_0, p)\neq(0,0,0)$ such that equation (\ref{equation: justification of no misreporting}) holds.
Therefore, the claim that equation (\ref{equation: justification of no misreporting}) implies truthful reporting is problematic. Even respondents do not lie in LE ($p=0$), the claim may still be problematic because
$(1-q_1)\delta+q_0(1-\delta)=\delta$ does not imply $q_1=0$ and $q_0=0$ without further information about respondents' reporting strategy in answering the sensitive question.

As a result, the modified LE approach relies on the validity of LE, which requires Assumption \ref{asp: random assignment} or \ref{asp: misreporting probability} to hold. In Section \ref{section: application of the test}, we show that more than half of the cases we tested violate at least one of the assumptions, and the validity of LE is undermined in these cases. In addition, the comparison between LE and direct responses cannot be used to justify the reliability of survey data without further information about the reporting strategy of respondents. To address these possible issues and to check the reliability of the data, we can first test the validity of LE using the method proposed in Section \ref{subsection: test misreporting behavior}. A rejection of LE implies that the modified LE cannot be used to verify the existence of measurement error in the survey data. It is also impossible to recover sensitive information from the data by using LE. An acceptance of LE, together with the estimated parameters of $p$ and $\delta$, is still insufficient for the modified LE approach to work because both $q_1$ and $q_0$ are latent parameters and cannot be recovered directly from the survey data.

\subsection{Application of the Test\label{section: application of the test}}
In this section, we apply our testing procedure to five recently published peer-reviewed articles with publicly available data,\footnote{These are the only peer-reviewed articles with publicly available data that we can find.} including \cite{cantoni2019protests}, \cite{chen2019impact}, \cite{coffman2017size}, \cite{muralidharan2016building}, and \cite{neggers2018enfranchising}.
\cite{cantoni2019protests} and \cite{chen2019impact} rely on the modified LE to justify the reliability of the survey data. \cite{muralidharan2016building} and \cite{neggers2018enfranchising} directly estimate respondents' sensitive information by using LE. Specifically, \cite{coffman2017size} demonstrate  the existence of misreporting by comparing respondents' direct response to sensitive questions and responses from a LE. The empirical results of the these five articles are fully or partially based on the validity of LE.

\begin{table}
\centering
\caption{Results of estimation and testing}
\vspace{-0.8cm}
\label{table: testing the assumption of LE}
\begin{center}
\scalebox{0.65}{\begin{threeparttable}
\begin{tabular}{llllllllll}
\hline\hline
  & & &\multicolumn{3}{c}{GMM estimate} &&\multicolumn{3}{c}{testing results} \\
\cline{4-6} \cline{8-10}\vspace{-0.4cm}
\\
sensitive question&\tabincell{c}{sample\\ size}  & \tabincell{c}{mean \\ difference}& $\delta$& $p_0$ & $p_1$ &&\tabincell{c} {$p_0\neq p_1$}&\tabincell{c} {$p_0= p_1$}&\tabincell{c} {$p=0$}\\
\hline
favorable view of CCP &1576 & 0.057 & 0.039 & 0.081 & --- &&$\checkmark$ &$\checkmark$& \xmark  \\
& & (0.057) & (0.073) & (0.047) &  \\
consider self Hong Kongese&1576 & 0.816 &---&---&---
&& \xmark&\xmark &\xmark \\
& & (0.048) \\
support for HK independence &1576 & 0.521 & 0.519 & 0.018 & --- && $\checkmark$& $\checkmark$& \xmark \\
& & (0.052) & (0.057) & (0.089) &  \\
\tabincell{l}{support violence in pursuit of\\  HK's political rights} &1576 & 0.389 & 0.423 & --- & --- && $\checkmark$& $\checkmark$& $\checkmark$ \\
& & (0.048) & (0.052) & &  \\
\hline
\tabincell{l}{I completely trust the \\central government of China}&1807 & 0.290 &---&---&---
&& \xmark&\xmark &\xmark \\
& & (0.038)  \\
 \hline
\tabincell{l}{do you consider yourself to be heterosexual?}&2516 & 0.894 &0.891&0.000&0.045
&& $\checkmark$&\xmark&$\dagger$ \\
& & (0.036) & (0.049) & (0.001) & (0.032) \\
\tabincell{l}{are you sexually attracted to members of \\the same sex?}&2516 & 0.258 &---&---&---
&& \xmark&\xmark &$\dagger$ \\
& & (0.035)  \\
\tabincell{l}{ have you had a sexual experience with \\ someone of the same sex?}&2516
& 0.545 &---&---&---
&& \xmark&\xmark &\xmark  \\
& & (0.040)  \\
\tabincell{l}{do you think marriages between gay and \\ lesbian couples should be recognized by \\ the  law as valid, with the same rights \\as heterosexual marriages?}&2516 & 0.986&---&---&---
 && \xmark&\xmark &\xmark \\
& & (0.032)  \\
\tabincell{l}{ would you be happy to have an openly \\lesbian, gay, or bisexual manager at work?}&2516 & 0.912 &---&---&---
&& $\dagger$&$\dagger$ &\xmark \\
& & (0.038)  \\
\tabincell{l}{do you believe it should be \\ illegal to discriminate  in hiring\\ based on someone's sexual orientation?}&2516 & 0.806
&---&---&---
 && \xmark&\xmark &\xmark  \\
& & (0.031)  \\
\tabincell{l}{do you believe lesbians and gay men \\ should be allowed to adopt children?}&2516 & 0.878 & --- & --- & --- && \xmark& \xmark& \xmark  \\
& & (0.034)  \\
\tabincell{l}{do you think someone who is \\ homosexual can change their sexual\\ orientation if they choose to do so?}&2516 & 0.186 & 0.187 & --- & --- && $\checkmark$& $\checkmark$& $\checkmark$ \\
 && (0.034) & (0.031) &  &  \\
\hline
\tabincell{l}{w/o smartcards system, members\\ of this household have been asked\\ by officials to lie about the\\ amount of work they did on NREGS} &917 & 0.044 &0.014 & --- & --- && \xmark & $\checkmark$ & $\checkmark$ \\
 && (0.059) & (0.072) & & \\
 \tabincell{l}{w/o smartcards system, members\\ of this household have been given\\ the chance to meet with the CM of AP\\ to discuss problems with NREGS?}&897 & 0.174 & 0.182 & 0.063 & 0.003 && $\checkmark$ & \xmark & \xmark \\
& & (0.062) & (0.090) & (0.102) & (0.078) \\
\tabincell{l}{w/ smartcards system, members\\ of this household have been asked\\ by officials to lie about the\\ amount of work they did on NREGS}&2300 & 0.106 &---&---&---
&&\xmark&\xmark&\xmark \\
& & (0.036)  \\
\tabincell{l}{w/ smartcards system, members\\ of this household have been given\\ the chance to meet with the CM of AP\\ to discuss problems with NREGS?}&2276 & 0.154 & 0.140 & --- & --- && $\checkmark$ & $\dagger$ & $\checkmark$ \\
 && (0.037) & (0.053) &  &  \\
\hline
\tabincell{l}{treated voters differently by religion/caste} &3833&	0.242&	0.248&---	&---	&&	
$\dagger$&$\checkmark$& $\checkmark$\\
&& (0.027) & (0.029) &  &  \\
\tabincell{l}{attempted to influence voting}&3850	&0.142	&---&---&---		&&\xmark&\xmark&\xmark\\
& & (0.038)  \\
 \hline\hline
\end{tabular}
\begin{tablenotes}
\item Note: The five panels (from top to bottom) are results for \cite{cantoni2019protests}, \cite{chen2019impact}, \cite{coffman2017size}, \cite{muralidharan2016building}, and \cite{neggers2018enfranchising}, respectively.
\item \xmark, $\dagger$, and $\checkmark$ indicate $p<0.05$, $p<0.1$, and $p>0.1$, respectively. Standard errors in parentheses are bootstrapped $1000$ times. When only $\hat{\delta}$ is provided, then $p=0$; if both $\hat{\delta}$ and $\hat{p}_0$ are provided, then $p_1=p_0$.
\end{tablenotes}
\end{threeparttable}
}
\end{center}
\end{table}
The setting of the surveys in these papers is the same as in our model. There are five ($J=5$) in \cite{muralidharan2016building} and four ($J=4$) nonsensitive questions in the other articles. To accommodate a flexible misreporting structure, we estimate and test the model under three alternative misreporting specifications:  (1) $p_0\neq p_1$, (2) $p_0=p_1$, and (3) $p_0=p_1\equiv p=0$. In theory, the first test nests the second, which further nests the third condition.
Nevertheless, the testing results may violate the nesting relationships in finite samples.
We also estimate the mean difference of $\mathbb{E}(Y_1)-\mathbb{E}(Y_0)$ following the existing literature. The standard errors are computed by bootstrapping $1,000$ times. In Table \ref{table: testing the assumption of LE}, we present the sensitive survey questions and the corresponding results of estimation and testing.\footnote{In all three scenarios, we have to drop one redundant moment condition. We conduct the estimation and hypothesis testing by dropping different moment conditions; the results are qualitatively similar. For each sensitive question, we present the results with the smallest $p$-value when a moment condition is dropped.} The estimates are presented only if the model cannot be rejected in at least one of the three specifications. If there is no rejection in at least two specifications, then we present the estimates for the more restrictive specification.\footnote{Recall that under Assumptions \ref{asp: random assignment} and \ref{asp: misreporting probability}, the mean difference also identifies the parameter $\delta$ if  $p_1=0$ and $\mathbb{E}(Y_0)=J/2$. We test the hypothesis $\mathbb{E}(Y_0)=J/2$ and find that it is rejected for all but the first question in \cite{coffman2017size}.}

We find that more than half of the test results indicate that the assumptions of LE are rejected. Each article has at least one rejection.  At the 5\% significance level, we reject the null hypothesis for 10, 11, and 12 questions out of 19, accounting for 53\%, 58\%, and 63\% for specifications (1), (2), and (3), respectively. At the 10\% significance level, the percentage of cases being rejected is
63\%, 68\%, and 74\% in specifications (1), (2), and (3), respectively. The model is rejected for 42\% (8 out of 19) and 63\% (12 out of 19) of the questions
at the 5\% and 10\% significance level, respectively, in all three specifications.  The estimates of sensitive information based on LE are biased when the assumptions are rejected. When the condition in specification (2) or (3) is rejected, the mean difference approach produces biased estimates and GMM can be applied to estimate the treatment effect $\delta$. When the condition in specification (1) is rejected, the underlying assumptions in GMM estimation also are violated and the treatment effect $\delta$ cannot be recovered from the data generation process.

We observe that when LE is not rejected, the reporting errors are estimated to be small and insignificant. All of the cases with valid LE included in our sample have statistically insignificant reporting errors, suggesting that misreporting may play an important role in the validity of LE.  The level of sensitivity of a survey question is not necessarily correlated with a model rejection. For example, in \cite{cantoni2019protests}, ``consider self Hong Kongese" is probably less sensitive than ``support violence in pursuit of HK's political rights", but the former is rejected and the latter is not. In \cite{coffman2017size}, ``do you think someone who is homosexual can change their sexual orientation if they choose to do so?" probably is as sensitive as ``do you consider yourself to be heterosexual". We fail to reject the model for the former question but reject the latter except for the scenario $p_1\neq p_0$.

We also conduct a test of the five articles under an alternative assumption:  respondents misreport only if their truthful
responses disclose privacy, i.e., when their truthful response is the outcome $J + 1$ in the treatment group. The empirical results presented in Table \ref{table: testing the assumption of LE alt} are qualitatively
similar to our findings above, with a slightly larger proportion of rejection. The details of the model and testing are summarized in
Appendix \ref{appendix: an alternative misreporting strategy}. The robustness of our testing results to specifications of measurement error implies that Assumption \ref{asp: random assignment} is more likely to cause the failure of LE than Assumption \ref{asp: misreporting probability}.

\section{Multiple Response Technique}
When equation (\ref{equation: relationship between observed responses with misreporting}) is rejected for a dataset, LE is no longer applicable to estimating the probability of positive responses to the sensitive question. The modified LE also relies on the latent reporting strategies of respondents and is subject to the restrictions in Proposition \ref{proposition: model restrictions under randomization}. Motivated by these issues, we propose a novel survey approach to recovering
the distribution of sensitive information and the latent reporting strategies of respondents.
In this approach, we treat sensitive information as respondents' unobserved heterogeneity and recover it from a series of responses to direct questions.

\subsection{The design of survey}
We maintain the assumption that there are $n$ respondents with a vector of characteristics $Z$.
All respondents answer the same set of survey questions and random assignment to groups is not required. In the survey, we ask three or more yes-no questions that are related to the sensitive information. Without loss of generality, we assume that there are three questions. Let $X_j\in\{0, 1\}$ be the random variable that describes the answer to the $j$-th question for $j=1, 2, 3$. The responses $X_j$ depend on the latent true preference
toward the sensitive information, denoted by $X^*\in\{0,1\}$. We aim to recover the conditional probability distribution of $X^*$, or $\Pr(X^*=1|Z)$, from the joint distribution of sensitive question answers $X_1, X_2, X_3$ and respondent characteristics $Z$. Our framework allows $X^*$ to take more than two values when the number of possible answers to each question is not less than the values $X^*$ takes. Three questions are sufficient for our purpose and more questions can be accommodated easily in our framework.

Suppose we are interested in the size of the LGBT population. The three yes-no survey questions related to LGBT sex orientation used in \cite{coffman2017size} are: (1) do you consider yourself to be heterosexual? (2)
are you sexually attracted to members of the same sex? (3) have you had a sexual experience with someone of the same sex?
Below we discuss the choice of these survey questions and respondents' characteristics,
guided by the identification strategy in the next section.

%\textit{A finite mixture model: I am not ready for this point yet, please leave it here.}

\subsection{The identification strategy}
%We consider identification of the conditional probability $\Pr(X^*=1|Z=z)$. The argument of identification can be directly extended to the case where $X^*$ takes more than two values.
Let $\Pr\left( X_{1}, X_2, X_3, Z\right)$ be the observed  joint distribution of three binary answers ($X_{1}, X_2, X_3$) to the survey questions and respondents' characteristics ($Z$). The objective of identification is to recover the conditional probability $\Pr(X^*=1|Z)$ from $\Pr\left( X_{1}, X_2, X_3, Z\right)$. The probability of respondents' answer to the $j$-th question conditional on their sensitive information and characteristics,  $\Pr(X_j=1|X^*,Z), j=1, 2, 3$, also can be identified in our framework. Our identification strategy is to treat the sensitive information as an unobserved heterogeneity of respondents, then to recover the distribution of the  heterogeneity from its multiple measurements, based on the methodology
of measurement errors proposed in \cite{hu2008identification}.

We first discuss the assumptions required to achieve identification.
\begin{asp}\label{assumption: conditional independence}
Given respondents' characteristics $Z$ and latent true response $X^*$ to the sensitive question, their responses to the three questions are independent.
\begin{equation*}
\Pr\left( X_{j}|X_{i}, X_k, X^{\ast },Z\right) =\Pr\left(X_{j}|X^{\ast },Z\right).
\end{equation*}
\end{asp}
Under Assumption \ref{assumption: conditional independence}, the responses to the three questions are allowed to be correlated through
respondents' sensitive information and their characteristics.
 The response to a sensitive question is determined by three components: the characteristics $Z$, the sensitive information $X^*$, and some random factors summarized by $\epsilon$. Without loss of generality, we express $X_j$ as $X_j=h_j(Z, X^*, \epsilon_j)$, where $h_j(\cdot)$ is an unknown function and $\epsilon_j$ is allowed to be correlated with both the characteristics $Z$ and the sensitive information $X^*$. Assumption \ref{assumption: conditional independence} states that after controlling for $Z$ and $X^*$, the remaining information in $\epsilon_j$ is mutually independent, i.e., $\epsilon_i \perp \epsilon_j| Z, X^*$.
 In the example of the LGBT population above, if a respondent's sensitive information is gay ($X^*=1$),
then the responses to the three questions ``heterosexuality", ``same-sex attraction", and ``same-sex sexual experience" could be correlated, because the respondent may respond strategically. Assumption \ref{assumption: conditional independence} requires that such correlation is only through
the demographics and  sexual orientation.

Assumption \ref{assumption: conditional independence} rules out the possibility that other unobserved heterogeneity
of respondents may affect their responses to the survey questions. Such restrictions can be alleviated by carefully choosing respondents' characteristics $Z$. For example, inclusion of respondents' religion  could successfully control respondents' religious effects on their responses. Although the assumption cannot be tested empirically, we evaluate the sensitivity of this assumption on our survey technique using Monte Carlo simulations. We find that our method is not sensitive to the assumption when the correlation is weak.

Our identification procedure begins with the relationship between the observed joint probabilities $\Pr\left( X_{1}, X_2, X_3|Z\right)$ and the model primitives $\Pr(X_j|X^*,Z)$ and $\Pr(X^*|Z)$. Under Assumption \ref{assumption: conditional independence}, we apply the law of total probability to $\Pr\left( X_{1}, X_2, X_3|z\right)$ for a given $z$,
\begin{eqnarray}\label{equation: the law of total probability with three variables}
\Pr\left( X_{1}, X_2, X_3|z\right) =\sum\nolimits_{X^*\in\{0,1\}}\Pr\left( X_{1}| X^*,z\right)\Pr\left( X_{2}| X^*,z\right)\Pr\left( X_3,X^*|z\right).
\end{eqnarray}
The equation above provides seven independent restrictions to seven unknown parameters.\footnote{$X_1, X_2$ and $X_3$ are binary, $\Pr\left( X_{1}, X_2, X_3|z\right)$ provides $2\times2\times2=8$ equations. The summation of the equations is one, thus there are seven  independent equations. $\Pr\left( X_{j}| X^*,z\right)$ contains one parameter for each $x^*$. The number of
of parameters in $\Pr\left( X_{1}| X^*,z\right)$ and $\Pr\left( X_{2}| X^*,z\right)$ is four. Similarly, there are three parameters in $\Pr\left( X_3,X^*|z\right)$. The total number of parameters is seven.} Nevertheless, these equations are nonlinear, and the unknown parameters may not be identifiable. Following \cite{hu2008identification}, these restrictions can be written in matrix form. Without loss of generality, we fix a value of $X_2=x_2\in\{0,1\}$ such that the joint distribution $\Pr\left( X_{1}, X_2, X_3|z\right)$ can be written as a matrix. For the given $x_2$, the matrix form of Equation (\ref{equation: the law of total probability with three variables}) is:
\begin{eqnarray}\label{equation: joint distribution of three observations}
M_{X_1, x_2,X_3|z}=M_{X_1|X^*, z}D_{x_2|X^*,z} M^{\prime}_{X_3, X^*|z},
\end{eqnarray}
where the matrices are defined as
\begin{eqnarray*}
\left(M_{X_1, x_2,X_3|z}\right)_{i,j} &=&\Pr(X_1=i, x_2, X_3=j|z),\nonumber\\
\left(M_{X_{1}|X^{\ast },z}\right)_{i,k} &=&\Pr(X_1=i|X^*=k, z),\nonumber\\
\left(M_{X_{3},X^{\ast }|z}\right)_{k,j} &=&\Pr(X_3=k,X^*=j| z),\nonumber\\
D_{x_{2}|X^{\ast },z}&=&\hbox{diag}\big[\Pr(x_2|X^*=0,z) \hspace{0.3cm} \Pr(x_2|X^*=1,z)\big].
\end{eqnarray*}
The matrix form (\ref{equation: joint distribution of three observations}) only provides three equations for the seven unknowns parameters. The joint distribution $\Pr\left( X_{1}, X_2, X_3|z\right)$ contains further identification information on $\Pr\left( X_{j}, X_{k}|z\right)$ and $\Pr(X_{j}|z)$, where the latter two distributions are dependent of the model primitives $\Pr(X_j|X^*, z)$ and $\Pr(X_k|X^*,z)$. We explore the identifying power of the joint distribution $\Pr\left( X_1, X_3|z\right)$ and rewrite $\Pr\left( X_1, X_3|z\right)$ in matrix form %$M_{X_1, X_3|z}$ % similar to equation (\ref{equation: joint distribution of three observations}),
\begin{eqnarray}\label{equation: joint distribution of two observations}
M_{X_1, X_3|z}=M_{X_1|X^*, z} M^{\prime}_{X_3, X^*|z},
\end{eqnarray}
where the matrix $M_{X_1, X_3|z}$ is defined as
$\left(M_{X_{1},X_3|z}\right)_{k,l}=\Pr(X_1=k, X_3=l|z).$
%Equation (\ref{equation: joint distribution of two observations}) provides 4 additional restrictions. together with
Equations (\ref{equation: joint distribution of three observations}) and (\ref{equation: joint distribution of two observations})
summarize all of the model restrictions to the seven unknowns and allow us to identify the parameters under the additional assumption below.
\begin{asp}\label{assumption: invertibility, distinct eigenvalues, and ordering}
Given characteristics $z$, (i) the probability that respondents with sensitive information answer ``yes" (or ``no") to each of the three questions is different from the probability of those without sensitive information, or
$\Pr(X_j=1|X^*=0, z)\neq \Pr(X_j=1|X^*=1, z), \forall z,  j=1, 2, 3$; and (ii) respondents with sensitive information answer ``yes" with a larger or smaller probability than respondents without sensitive information, that is, the order of $\Pr(X_k=1|X^*=0,z)$ and $\Pr(X_k=1|X^*=1,z)$ is known for $k=1$ or $k=2$.
\end{asp}
Assumption \ref{assumption: invertibility, distinct eigenvalues, and ordering}(i)  requires that given characteristics $z$, a non-LGBT respondent responds positively (or negatively) to the three questions with a different probability than an LGBT respondent. For example, $\Pr(X_1=0|X^*=0, z) \neq \Pr(X_1=0|X^*=1, z)$ implies that the probability
that a non-LGBT respondent is against same-sex marriage differs from that of an LGBT respondent, after controlling for characteristics $z$. Assumption \ref{assumption: invertibility, distinct eigenvalues, and ordering}(i) holds for $j=1,3$ if and only if the matrix $M_{X_1, X_3|z}$ is invertible for $z\in \mathcal{Z}$. The invertibility of $M_{X_1, X_3|z}$ can be tested from the observed sample of $X_1$ and $X_3$ based on some existing methods, e.g., \cite{robin2000tests}. Similarly, we can test the invertibility of the matrix $M_{X_1, X_2|z}$ to verify whether part (i) holds for $j=2$. Assumption \ref{assumption: invertibility, distinct eigenvalues, and ordering}(ii) requires that $\Pr(X_k=1|X^*=0,z)$ and $\Pr(X_k=1|X^*=1,z)$ can be ordered for $k=1$ or $k=2$. For example, the assumption $\Pr(X_1=1|X^*=0,z)<\Pr(X_1=1|X^*=1,z)$ states that LGBT respondents are more likely to support same-sex marriage than non-LGBT respondents after for controlling their characteristics.

Under Assumption \ref{assumption: invertibility, distinct eigenvalues, and ordering}(i), we take the inverse of both sides of equation (\ref{equation: joint distribution of two observations}) and multiply it
from right to equation (\ref{equation: joint distribution of three observations}),
\begin{eqnarray}\label{equation: main identification equation}
M_{X_1, x_2,X_3|z}M^{-1}_{X_1, X_3|z}=M_{X_1|X^*, z}D_{x_2|X^*,z} M^{-1}_{X_1|X^*, z}.
\end{eqnarray}
The left-hand-side is a product of two observed matrices and the right-hand-side is an eigenvalue-eigenvector decomposition of the left-hand-side, with
$M_{X_1|X^*, z}$ and $D_{x_2|X^*,z}$ being the eigenvector and eigenvalue matrices, respectively.

The latent matrices $M_{X_1|X^*, z}$ and $D_{x_2|X^*,z}$ are identified from a unique decomposition of matrix $M_{X_1, x_2,X_3|z}M^{-1}_{X_1, X_3|z}$. It requires that: (a) the eigenvector matrix is normalized; (b) the two eigenvalues are distinct; and (c) eigenvalues or eigenvectors are correctly ordered. Requirement (a) holds without additional assumptions: the column sum of the matrix $M_{X_1|X^*, z}$ is one, dividing each element by its column sum  normalizes the eigenvector matrix. Requirement (b) is satisfied under Assumption \ref{assumption: invertibility, distinct eigenvalues, and ordering}(i) for $x_2$, i.e., $\Pr(X_2=1|X^*=1,z) \neq \Pr(X_2=1|X^*=0,z)$.  Assumption \ref{assumption: invertibility, distinct eigenvalues, and ordering}(ii) ensures that requirement (c) is met. If the inequality holds for $k=1$ and $k=2$, then the two columns of the eigenvector matrix and two eigenvalues can be ordered, respectively.
Once the matrices $M_{X_1|X^*, z}$ and $D_{x_2|X^*,z}$ are identified, we can recover the distribution of $X^*$, $\Pr(X^*|z)$ from \begin{eqnarray*}
\Pr(X_1|z)=\sum\nolimits_{X^*\in\{0,1\}}\Pr\left( X_{1}, X^*|z\right)=\sum\nolimits_{X^*\in\{0,1\}}\Pr\left( X_{1}|X^*, z\right)P(X^*|z),
\end{eqnarray*}
or
$
M_{X_1|z} =M_{X_1|X^*,z} M_{X^*|z}
$
in matrix form, where $M_{X^*|z}=\big[\Pr(X^*=0| z) \hspace{0.2cm}\Pr(X^*=1| z)\big]^{\prime}$ and $M_{X_1|z} =\left[\Pr(X_1=0| z)\hspace{0.2cm}\Pr(X_1=1| z)\right]^{\prime}$. Since $M_{X_1|X^*,z}$ is invertible, we have
\begin{eqnarray}\label{equation: estimation of the conditional probabilities of sensitive information}
M_{X^*|z}=M^{-1}_{X_1|X^*,z} M_{X_1|z} .\end{eqnarray}
The unconditional probability of sensitive information $\Pr(X^*)$ can be calculated by integrating $\Pr(X^*|z)$ over $z$.

We summarize the identification results in the theorem below.
\begin{thm}\label{theorem: results of nonparametric identification}
Under Assumptions \ref{assumption: conditional independence} and \ref{assumption: invertibility, distinct eigenvalues, and ordering}, the probabilities of sensitive information $\Pr(X^*|z)$, and the respondents' responses conditional on sensitive information $\Pr(X_j|X^*,z), j=1, 2, 3$ are uniquely determined by the joint distribution $\Pr(X_1, X_2, X_3|z)$.
\end{thm}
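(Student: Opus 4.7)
The plan is to execute the identification argument that has already been set up in the discussion preceding the theorem, and to verify that each step gives a unique solution. The starting point is to note that under Assumption \ref{assumption: conditional independence}, the law of total probability yields equations (\ref{equation: joint distribution of three observations}) and (\ref{equation: joint distribution of two observations}), which express the observed conditional joint distributions in matrix form in terms of the latent objects $M_{X_1|X^*,z}$, $M_{X_3,X^*|z}$, and $D_{x_2|X^*,z}$. Since $X_1, X_2, X_3$ are observable given $z$, the matrices $M_{X_1, x_2, X_3|z}$ and $M_{X_1, X_3|z}$ are directly identified from the data.

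I would then use Assumption \ref{assumption: invertibility, distinct eigenvalues, and ordering}(i) to argue that $M_{X_1, X_3|z}$ is invertible. The argument is that $M_{X_1,X_3|z} = M_{X_1|X^*,z} \, \mathrm{diag}[\Pr(X^*=0|z), \Pr(X^*=1|z)] \, M^{\prime}_{X_3|X^*,z}$ (after factoring the marginal of $X^*$ out of $M_{X_3, X^*|z}$), and each factor is non-singular: the two $2\times 2$ matrices of conditional response probabilities have distinct columns by part (i), and the diagonal middle term is non-singular as long as $X^*$ is a non-degenerate binary variable (which can be taken as a maintained regularity condition). Inverting equation (\ref{equation: joint distribution of two observations}) and post-multiplying (\ref{equation: joint distribution of three observations}) then yields equation (\ref{equation: main identification equation}), which is the crucial eigendecomposition
\begin{equation*}
M_{X_1, x_2, X_3|z} M^{-1}_{X_1, X_3|z} = M_{X_1|X^*, z} \, D_{x_2|X^*,z} \, M^{-1}_{X_1|X^*, z}.
\end{equation*}

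The heart of the proof, and the step I expect to be the main obstacle, is establishing that this eigendecomposition is \emph{unique}, so that $M_{X_1|X^*,z}$ and $D_{x_2|X^*,z}$ are pinned down by the observable left-hand side. In general, an eigendecomposition is only unique up to (a) scaling of eigenvectors, (b) permutation of the columns, and (c) coincidence of eigenvalues. I would dispose of each ambiguity in turn: (a) the scale is fixed by normalizing each column of $M_{X_1|X^*,z}$ to sum to one, which is automatic since its columns are the probability vectors $[\Pr(X_1=0|X^*=k,z),\Pr(X_1=1|X^*=k,z)]^{\prime}$; (b) the diagonal entries of $D_{x_2|X^*,z}$ are $\Pr(x_2|X^*=0,z)$ and $\Pr(x_2|X^*=1,z)$, which are distinct by Assumption \ref{assumption: invertibility, distinct eigenvalues, and ordering}(i) applied to $j=2$, so the two eigenvectors are unambiguously paired with their eigenvalues; and (c) the column ordering is then fixed by Assumption \ref{assumption: invertibility, distinct eigenvalues, and ordering}(ii), which provides a known inequality on either the eigenvectors (through $k=1$) or the eigenvalues (through $k=2$). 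These three facts together identify $M_{X_1|X^*,z}$ and, reading off the diagonal, $\Pr(X_2=x_2|X^*,z)$ for both values of $x_2$, hence $M_{X_2|X^*,z}$.

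Having identified $M_{X_1|X^*,z}$, I would recover $\Pr(X^*|z)$ from equation (\ref{equation: estimation of the conditional probabilities of sensitive information}), noting that $M_{X_1|z}$ is observable and $M_{X_1|X^*,z}$ is invertible by Assumption \ref{assumption: invertibility, distinct eigenvalues, and ordering}(i). Finally, $M_{X_3,X^*|z}$ is recovered by post-multiplying $M^{-1}_{X_1|X^*,z}$ on both sides of equation (\ref{equation: joint distribution of two observations}) and taking the transpose, and dividing each column by the corresponding $\Pr(X^*|z)$ (now identified) yields $\Pr(X_3|X^*,z)$. This completes the chain of unique identification of $\Pr(X^*|z)$ and $\Pr(X_j|X^*,z)$ for $j=1,2,3$ from the observed $\Pr(X_1,X_2,X_3|z)$, as claimed. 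The whole argument is essentially an application of Theorem 1 in \cite{hu2008identification} to the three-measurement setting with a binary latent variable, and I would cite that result to avoid re-deriving the uniqueness of the eigendecomposition from scratch.
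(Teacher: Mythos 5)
Your proposal is correct and follows essentially the same route as the paper: the law of total probability under Assumption \ref{assumption: conditional independence}, inversion of $M_{X_1,X_3|z}$ to obtain the eigendecomposition in equation (\ref{equation: main identification equation}), resolution of the scaling/pairing/ordering ambiguities via column normalization, Assumption \ref{assumption: invertibility, distinct eigenvalues, and ordering}(i) for $j=2$, and Assumption \ref{assumption: invertibility, distinct eigenvalues, and ordering}(ii), followed by recovery of $\Pr(X^*|z)$ and $M_{X_3,X^*|z}$ exactly as in equations (\ref{equation: estimation of the conditional probabilities of sensitive information}) and (\ref{equation: joint distribution of two observations}); your explicit flagging of the non-degeneracy of $X^*$ as a maintained condition is if anything slightly more careful than the text. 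The only nit is that isolating $M^{\prime}_{X_3,X^*|z}$ from equation (\ref{equation: joint distribution of two observations}) requires \emph{pre}-multiplying by $M^{-1}_{X_1|X^*,z}$, not post-multiplying, but your subsequent formula is the right one.
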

\begin{proof}
See Appendix.
\end{proof}
In some experimental settings where both our method and LE are applicable, one can compare the probability of sensitive information $\Pr(X^*|z)$ obtained by the two methods. If LE is rejected by using our test,
then such a comparison is meaningless. If LE is not rejected, then the two methods may yield different estimates because they are based on different sets of assumptions.

The intuition of the results in Theorem \ref{theorem: results of nonparametric identification} is that a latent variable can be recovered from its multiple measurements (see \cite{hu2017econometrics}). The response to each question $X_j$ provides partial information on the latent variable $X^*$; the joint distribution contains all of the information on the latent variable as well as measurement errors. The main idea behind identification is to use additional information to recover the latent parameters, similar to the use of instrumental variables in a linear regression.

Assumptions \ref{assumption: conditional independence} and \ref{assumption: invertibility, distinct eigenvalues, and ordering} are sufficient for identification. They provide clear guidance for researchers to choose survey questions and respondents' characteristics. The result in Theorem \ref{theorem: results of nonparametric identification} is for cases with binary sensitive information, and it also applies when the number of possible responses to sensitive information is greater than two.

Our method has several advantages over the existing methods of eliciting sensitive information. First, we are able to recover sensitive information without random assignment of respondents. The testing results in Section \ref{section: application of the test} indicate that the assumption of random assignment and/or no impact of the sensitive question may fail in LE. In our design, respondents answer the same set of questions and the impacts of assignment to different groups on estimation are ruled out.

Second, our approach allows for arbitrary misreporting, including no misreporting,  by respondents. Using mean difference or regression approaches, the existence of misreporting may bias the LE method. The GMM potentially may fix the misreporting issue under limited circumstances. Unfortunately, GMM further relies on the linear misreporting error assumed in Assumption \ref{asp: misreporting probability}, which may be invalid in some surveys. If the misreporting errors enter the responses nonlinearly, then identification of LE would be much more complicated and likely to fail. By contrast, as shown in our identification strategy, we impose no restrictions on misreporting.
Moreover, when respondents are directly asked about their sensitive information  (e.g., in \cite{coffman2017size} respondents are asked directly whether they are heterosexual), we can recover the magnitude of misreporting from respondents' answers. We will explain this in detail in the application in Section \ref{subsection: estimating sexual orientation}.

Third, we can obtain sensitive information at a disaggregated level. \cite{bertrand2017field} point out that a major disadvantage of LE, as well as other methods of eliciting sensitive information, is that sensitive information only can be recovered at the population level. We advance the literature by decomposing any result for the population into different groups,
characterized by the unobserved level of sensitive information.  For example, a study of ``supporting same-sex marriage" using LE (suppose that LE is valid), as in \cite{coffman2017size}, only gives a supporting rate at an aggregated level: a weighted average of those who are friendly to the LGBT population and those who are not. We can recover the rates for the two individual groups by treating the proportion of respondents with sensitive information as the weight.

Moreover, researchers often are interested in the dependence of sensitive information on covariates. We can parametrize the conditional probability  $\Pr(X^*=1|z)=g(z;\theta)$ and estimate the marginal effect of $z$ at a given value of $z_i$. Similarly,
we can obtain a prediction on the probability of sensitive information for a respondent with a characteristic $z_i$ by $\Pr(X^*=1|z_i;\theta)$.

 \subsection{Estimation and simulation}
 The model parameters identified in Theorem \ref{theorem: results of nonparametric identification} can be estimated either parametrically or nonparametrically. When covariates $Z$ are discrete, it is convenient to estimate the model primitives nonparametrically for each possible value of $Z$. The first approach is to follow
the constructive identification procedure and apply the eigenvalue-eigenvector decomposition for estimation. The main advantage of this approach is that the estimator is global and involves no optimization and iteration. One possible drawback is that the estimated probability might be outside the interval $[0,1]$ due to finite sample properties. To address this, we propose an extreme estimator, minimizing a matrix norm for the difference between two sides of equation (\ref{equation: main identification equation}), where the latent probabilities are constrained to $[0,1]$. When covariates $Z$ are continuous, we estimate the model primitives using MLE. Our first step is to parametrize the latent conditional probabilities, e.g., $\Pr(X^*=1| z) \equiv g(z; \theta)$ with $g(\cdot; \theta)$ being a logistic function. The next step is to maximize the likelihood function based on equation (\ref{equation: main identification equation}). The details are in Appendix \ref{section: estimation methods}.

We conduct Monte Carlo simulations to demonstrate that the proposed estimators perform satisfactorily even for a sample with modest size ($N=500$). The estimating method performs better for discrete covariate than for continuous covariate. We also check the sensitivity of our estimates to the conditional independence Assumption \ref{assumption: conditional independence}. Our findings show that our estimation is robust when $X_1, X_2$, and $X_3$ are weakly correlated (e.g., the correlation coefficients are less than 0.1) conditional on $X^*$ and $Z$. Detailed simulation procedures and results are in Appendix \ref{section: Monte Carlos simulations}.

\section{Estimating LGBT-Related Information}

Data on LGBT-population and LGBT-related sentiment play an important role in a wide range of
topics of research and policy: e.g., discrimination in the labor market; sexually transmitted diseases
and related policies; the demand for children; educational investment; and household labor supply.\footnote{Please
see \cite{coffman2017size} for a detailed discussion.} Unfortunately, such survey data are very likely subject to misreporting
and the results from different surveys vary significantly (\cite{gonsiorek1995definition}). It is unclear in the literature how
researchers could obtain reliable estimates based on the LGBT-related data containing measurement error.

In this section, we apply our multiple response technique (MRT) to estimate respondents' sexual orientation and LGBT-related sentiment. We then provide some novel findings that are obscured with LE or other existing methods of survey.
 The data come from \cite{coffman2017size} where respondents answer a series of sensitive questions on sexual orientation and LGBT-related sentiment.

\subsection{Sexual orientation\label{subsection: estimating sexual orientation}}
First we estimate the sexual orientation of respondents. Let $X^*=1$ denote  latent sexual orientation being non-heterosexuality.
We then use
the responses to the three sexuality-related sensitive questions in  \cite{coffman2017size} as measurements $X_j, j=1, 2, 3$,
\begin{enumerate}
 \setlength\itemsep{0.001cm}
\item Do you consider yourself to be heterosexual?
\item Are you sexually attracted to members of the same sex?
\item Have you had a sexual experience with someone of the same sex?
\end{enumerate}
We use $X_j=1$ to represent an affirmative answer to these above questions.
 To examine the possible dependence of sexual orientation on respondents' demographics,
 we choose the covariates $Z$ to be gender, race, religion, politics, and age, based on the findings in \cite{coffman2017size} that these demographics affect respondents' reporting behaviors. Assumption  \ref{assumption: conditional independence} requires that given a respondent's sexual orientation and demographics, responses to the three questions above are independent. Assumption \ref{assumption: invertibility, distinct eigenvalues, and ordering}(i) requires that the proportion of respondents answering ``yes" to any of the three questions varies across the two groups of respondents with different sexual orientations.  The matrices $M_{X_2,X_3|z}$ and $M_{X_1, X_2|z}$ are full rank for all covariates based on the test method in  \cite{robin2000tests}, implying that Assumption \ref{assumption: invertibility, distinct eigenvalues, and ordering}(i) holds. Assumption \ref{assumption: invertibility, distinct eigenvalues, and ordering}(ii) requires that the heterosexual respondents are more likely to provide
 an affirmative answer than their counterparts to the first question (``heterosexual") given their demographics, i.e., $\Pr(X_1=1|X^*=0,z)>\Pr(X_1=1|X^*=1,z)$.\footnote{The response to the first question $X_1$ is used eigenvalues in Section \ref{section: estimation methods}.}
The model primitives and their standard errors are estimated using the extreme estimator and bootstrapped for 1,000 times, respectively.  We present the results of estimation in Table \ref{table: estimation results of LGBT probability} and further depict the estimated proportion of non-heterosexuality and their 95\% confidence intervals in the left panel of Figure \ref{fig:effects of demographics on sexuality and sentiment}.
\begin{table}
\centering
\caption{Results of estimation: Sexual orientation}
\vspace{-0.6cm}
\label{table: estimation results of LGBT probability}
\begin{center}
\scalebox{0.70}{\begin{threeparttable}
\begin{tabular}{llllllllllllllllll}
\hline\hline
\cline{3-17}\\
& &\multicolumn{2}{c}{gender} &&\multicolumn{2}{c}{race} &&\multicolumn{2}{c}{religion}&&\multicolumn{2}{c}{politics}&&\multicolumn{3}{c}{age} \\
\cline{3-4} \cline{6-7}\cline{9-10}\cline{12-13} \cline{15-17}
\\
parameter&overall&male & female && white & black && christian& no reli.&& dec.& rep.&&$<31$ &31-50&$>50$ \\
\hline
$\Pr(X^*=1)$  &0.113  & 0.074 & 0.170 &  & 0.119 & 0.094 &  & 0.077 & 0.141 &  & 0.149 & 0.047 &  & 0.132 & 0.102 & 0.024 \\
 &(0.019) & (0.011) & (0.020) &  & (0.013) & (0.191) &  & (0.033) & (0.019) &  & (0.018) & (0.088) &  & (0.015) & (0.020) & (0.118) \\
 \hline
$\Pr(X_1=1|1)$ &0.293& 0.351 & 0.260 &  & 0.285 & 0.492 &  & 0.394 & 0.238 &  & 0.256 & 0.473 &  & 0.287 & 0.339 & 0.000 \\
 &(0.048)& (0.081) & (0.060) &  & (0.051) & (0.225) &  & (0.104) & (0.062) &  & (0.057) & (0.186) &  & (0.058) & (0.097) & (0.125) \\
$\Pr(X_1=1|0)$ &0.963& 0.965 & 0.959 &  & 0.965 & 0.956 &  & 0.962 & 0.967 &  & 0.955 & 0.970 &  & 0.955 & 0.980 & 0.971 \\
 &(0.031)& (0.031) & (0.032) &  & (0.031) & (0.043) &  & (0.032) & (0.032) &  & (0.032) & (0.033) &  & (0.031) & (0.032) & (0.210) \\
 \hline
$\Pr(X_2=1|1)$&1.000&  1.000 & 0.984 &  & 1.000 & 1.000 &  & 0.997 & 1.000 &  & 1.000 & 1.000 &  & 1.000 & 0.885 & 1.000 \\
&(0.018) & (0.016) & (0.028) &  & (0.015) & (0.161) &  & (0.050) & (0.013) &  & (0.018) & (0.063) &  & (0.004) & (0.080) & (0.133) \\
$\Pr(X_2=1|0)$ &0.029& 0.019 & 0.046 &  & 0.028 & 0.047 &  & 0.024 & 0.027 &  & 0.038 & 0.037 &  & 0.033 & 0.024 & 0.015 \\
 &(0.032)& (0.032) & (0.034) &  & (0.032) & (0.165) &  & (0.045) & (0.033) &  & (0.033) & (0.042) &  & (0.033) & (0.033) & (0.235) \\
 \hline
$\Pr(X_3=1|1)$ &0.756& 0.731 & 0.765 &  & 0.756 & 0.778 &  & 0.739 & 0.798 &  & 0.825 & 1.000 &  & 0.721 & 0.863 & 1.000 \\
 &(0.087)& (0.001) & (0.008) &  & (0.006) & (0.006) &  & (0.008) & (0.010) &  & (0.007) & (0.000) &  & (0.008) & (0.006) & (0.000) \\
$\Pr(X_3=1|0)$ &0.098& 0.076 & 0.130 &  & 0.111 & 0.029 &  & 0.060 & 0.136 &  & 0.099 & 0.079 &  & 0.089 & 0.105 & 0.105 \\
 &(0.029)& (0.030) & (0.028) &  & (0.028) & (0.029) &  & (0.030) & (0.028) &  & (0.029) & (0.029) &  & (0.028) & (0.027) & (0.028) \\
 \hline
$p$-value ($q_1=0$) &0.000 &0.000 & 0.000 &  & 0.000 & 0.014 &  & 0.000 & 0.000 &  & 0.000 & 0.006 &  & 0.000 & 0.000 & 0.500 \\
$p$-value($q_0=0$) &0.182  &0.129 & 0.100 &  & 0.129 & 0.153 &  & 0.118 & 0.151 &  & 0.080 & 0.182 &  & 0.073 & 0.266 & 0.445 \\
 sample size &1270&740 & 525&& 1022&81&& 463&535&&578&194&&840&351&79\\
 \hline\hline
\end{tabular}
\begin{tablenotes}
\item Note: $X^*=1$ stands for non-heterosexuality. $\Pr(X_j=1|1)$ and $\Pr(X_j=1|0)$ are $\Pr(X_j=1|X^*=1)$ and $\Pr(X_j=1|X^*=0)$, respectively, for $j=1,2,3.$  $X_1=1, X_2=1$, and $X_3=1$ represent affirmative answers to  ``heterosexuality", ``same-sex attraction", and ``same-sex sexual experience", respectively.
\item The column ``overall" includes unconditional estimates, all other columns of estimates are conditional on demographics.
\item The $p$-value ($q_1=0$) and ($q_0=0$) are for the hypotheses $q_1=0$ v.s $q_1>0$, and $q_0=0$ v.s $q_0>0$, respectively.
\item The results are estimated using the extreme estimator proposed in Section \ref{section: estimation methods}. Standard errors are bootstrapped 1000 times.

\end{tablenotes}
\end{threeparttable}
}
\end{center}
\end{table}

There are several main findings from the estimates of sexual orientation. First, we provide novel results on respondents' misreporting behaviors.
 When the respondents are asked directly about sexual orientation (in the first sensitive question), the estimate of $\Pr(X_1=1|X^*)$ and $\Pr(X_1=1|X^*,Z)$ measure respondents' misreporting and its dependence on demographics. Table \ref{table: estimation results of LGBT probability} shows that a substantial portion (about 28.8\%) of non-heterosexual respondents report themselves to be heterosexual.  The misreporting  depends heavily on demographics. Male, Black, Christian, and Republican respondents report a much lower proportions of non-heterosexuality than their counterpart groups. For example,
47.3\% of non-heterosexual Republicans claim themselves to be heterosexual, while the percentage is only 25.6\% for Democrats, which is 45.9\% smaller. The misreporting among heterosexual respondents is much smaller. Only 2.9\% of heterosexual respondents claim non-heterosexuality, and that estimate
is not significantly different from zero for all of the demographic groups. While misreporting in sensitive survey questions has been documented as a serious issue, the quantitative results on misreporting behaviors are largely missing in the literature. Our estimates quantify the misreporting behaviors and thus improve our understanding of respondents' reporting strategies across different demographic groups.

The estimates of misreporting also can be used to evaluate the applicability of the modified LE. As discussed in Section \ref{section: modified LE}, the applications of the modified LE rely on
the probabilities $q_1$ and $q_0$: the proportion of heterosexual and homosexual respondents who misreport their sexuality, respectively. In our notation, $q_1\equiv \Pr(X_1=1|X^*=1)$ and $q_0\equiv 1-\Pr(X_1=1|X^*=0)$, and truthful reporting means $q_1=0$ and $q_0=0$.
Based on our estimates, $q_1=0$ is rejected while $q_0=0$ is not rejected at the 5\% significance level.\footnote{The only exception is that we fail to reject $q_1=0$ for respondents older than 50. This may be due to the small sample size of this group.} Our findings suggest that heterosexual respondents report their sexuality truthfully while non-heterosexual respondents significantly  misreport. This provides new insights regarding misreporting on sensitive questions, because reporting strategies of the two groups have not been identified separately in the literature.

Second, we directly estimate the proportion of non-heterosexual respondents and their dependence on demographics. We find that 11.3\% of respondents are non-heterosexual. That proportion varies significantly across demographic groups, especially across gender, religion, politics, and age. The largest difference is observed between Democrats and Republicans. The proportions of non-heterosexuality are 14.9\% among Democrats and 4.7\% among Republicans. Our estimate of the proportion of non-heterosexuality is in the upper tail of the existing results,\footnote{In a review article, \cite{gonsiorek1995definition} suggest that the current prevalence of predominant same-sex orientation is 4-17\%.} and this may be due to the sample being younger, more liberal, and less religious than general population.
 As pointed out in \cite{gonsiorek1995definition}, the variation in the existing results is caused by different definitions and measures of LGBT across surveys. Our quantitative results reveal heterogenous responses to the survey questions among different sexual orientations and uncover the distribution of non-heterosexuality across demographic groups. These shed light on the importance of survey design and the choice of demographics of respondents. For example, we find a larger proportion of non-heterosexual Democrats than Republican among respondents. One implication is thus that respondents of different political views should be evenly recruited for a survey in order to obtain reasonable results.

%A comparison between our estimate and that in \cite{coffman2017size} casts doubt on the estimate of
%LE. In \cite{coffman2017size}, the proportion of non-heterosexual respondents is estimated to be 11.3\% by %asking respondents directly, which is the same as our estimate. However, the estimate from LE, 18.6\%, is %significantly larger.

Finally, it is worth noting that our estimates are not directly comparable to those from LE in \cite{coffman2017size} because the assumptions of our method and LE are different. Nevertheless,
our conclusions are consistent with the general findings in \cite{coffman2017size}: that respondents' non-heterosexual identity and same-sex sexual experience are substantially under-reported, but
not same-sex attraction, based on the modified LE approach. We estimate that only 71.2\% and 69.5\% of homosexual respondents
report to be non-heterosexual and to have had same-sex sexual experiences, respectively. However, almost all of them claim that they are sexually attracted to people of the same sex.

\subsection{LGBT-related sentiment}
In this section, we analyze LGBT-related sentiment and its dependence on the demographics of respondents.
In this application, $X^*=1$ indicates negative sentiment toward the LGBT population. The three measurements of that sentiment are the answers to the following questions,\footnote{There are five questions on LGBT-related sentiment in \cite{coffman2017size}. We choose the first three of them for our analysis.}
 \begin{enumerate}
 \setlength\itemsep{0.001cm}
\item Do you think marriages between gay and lesbian couples should be recognized by the law as valid, with the same rights
as heterosexual marriages?
\item Would you be happy to have an openly lesbian, gay, or bisexual manager at work?
\item Do you believe it should be
illegal to discriminate in hiring
based on someone's sexual orientation?
\end{enumerate}
The choice of covariates $Z$ is the same as in our sexual orientation analyses. The identification strategy requires that the responses are independent, given a respondent's sentiment and demographics. Respondents with negative sentiment would respond to ``support same-sex marriage", ``happy with LGB manager", and ``believe that it is illegal to discriminate LGBT people" differently from those with positive sentiment. Moreover, we assume that respondents with positive sentiment are more likely to support same-sex marriage than their counterparts, given their demographics, i.e., $\Pr(X_1=1|X^*=0,z)>\Pr(X_1=1|X^*=1,z)$. The results presented in Table \ref{table: estimation results of LGBT sentiment} are based on the same estimating procedure as in Section \ref{subsection: estimating sexual orientation}. The  probability of negative sentiment conditional on demographics and their 95\% confidence intervals also are presented in the right panel of Figure \ref{fig:effects of demographics on sexuality and sentiment}.

We summarize the main findings as follows. First, 13.3\% of the respondents are estimated to have a negative sentiment toward the LGBT population. If we assume that all of these respondents are heterosexual, then about 15\% (calculated from 13.3\%/(1-11.3\%), where 11.3\% is the estimated size of the LGBT population) of heterosexual respondents are not friendly to the LGBT-population. More importantly, the respondents with negative sentiment are significantly divided. Black, Christian, and Republican respondents are substantially more negative toward the LGBT population than their counterparts. For example, 22\% of Christian respondents have negative sentiments while the percentage is only 3.2\% for non-religious respondents.

Second, the responses of those respondents with negative sentiment to the three questions display significant divergence.
Having an LGB manager at work is the least accepted among those respondents. Only 2.0\% of them answered affirmatively. By contrast, 58.5\% of them agree that discrimination against LGBT people is illegal. The supporting rate  for same-sex marriage is 17.8\%.  These estimates exhibit sharp divides among demographic groups and the attitude toward same-sex marriage is the most divergent among the three questions. Specifically, the proportion of white respondents supporting same-sex marriage is 5.3 times the size of black respondents; that of nonreligious respondents is 2.9 times the size of Christian respondents, and that of Democrats is 5.0 times the size of Republicans.  The response to ``happy with LGB manager" question is significantly influenced only by race. The divergence on ``believe it is illegal to discriminate LGBT people" is most obvious across gender, religion, and politics. These observations imply that the magnitude of antigay sentiment differs significantly across its three dimensions (or measurements) and demographic groups.

On the other hand, among respondents with positive sentiments, 96.4\% and 90.9\% of them are okay with an LGB manager at work and supporting same-sex marriage, respectively. The proportions are not significantly different from 100\% at the 5\% significance level. A slightly smaller proportion (89.8\%) of them believe that discrimination against LGBT people is illegal. The three estimates are not statistically different at the 5\% significance level.

Our analysis of anti-gay sentiment for those who have negative sentiments is novel, and the findings greatly enhance our understanding of the issue. Neither LE nor modified LE allows us to derive such quantitative evidence as those models can only provide weighted average results of the two groups with opposite sentiments.
\begin{table}
\centering
\caption{Results of estimation: LGBT-related sentiment}
\vspace{-0.6cm}
\label{table: estimation results of LGBT sentiment}
\begin{center}
\scalebox{0.70}{\begin{threeparttable}
\begin{tabular}{llllllllllllllllll}
\hline\hline
\cline{3-17}\\
& &\multicolumn{2}{c}{gender} &&\multicolumn{2}{c}{race} &&\multicolumn{2}{c}{religion}&&\multicolumn{2}{c}{politics}&&\multicolumn{3}{c}{age} \\
\cline{3-4} \cline{6-7}\cline{9-10}\cline{12-13} \cline{15-17}
\\
parameter&overall&male & female && white & black && christian& no reli.&& dec.& rep.&&$<31$ &31-50&$>50$ \\
\hline$\Pr(X^*=1)$ &0.133 & 0.136 & 0.135 &  & 0.115 & 0.383 &  & 0.221 & 0.032 &  & 0.115 & 0.321 &  & 0.125 & 0.160 & 0.329 \\
 &(0.076) & (0.081) & (0.319) &  & (0.261) & (0.067) &  & (0.021) & (0.214) &  & (0.116) & (0.036) &  & (0.075) & (0.140) & (0.063) \\
 \hline
$\Pr(X_1=1|1)$ &0.178 &0.160 & 0.213 &  & 0.154 & 0.029 &  & 0.092 & 0.265 &  & 0.403 & 0.081 &  & 0.263 & 0.189 & 0.192 \\
 &(0.001)& (0.005) & (0.055) &  & (0.032) & (0.014) &  & (0.001) & (0.195) &  & (0.124) & (0.001) &  & (0.032) & (0.045) & (0.001) \\
$\Pr(X_1=1|0)$ &0.909& 0.920 & 0.896 &  & 0.913 & 0.943 &  & 0.817 & 0.987 &  & 0.968 & 0.728 &  & 0.934 & 0.874 & 0.821 \\
 &(0.137)& (0.148) & (0.499) &  & (0.378) & (0.081) &  & (0.001) & (0.233) &  & (0.153) & (0.001) &  & (0.104) & (0.274) & (0.062) \\
 \hline
$\Pr(X_2=1|1)$ &0.020& 0.017 & 0.032 &  & 0.000 & 0.314 &  & 0.017 & 0.000 &  & 0.144 & 0.087 &  & 0.095 & 0.000 & 0.000 \\
 & (0.096)&(0.139) & (0.185) &  & (0.169) & (0.082) &  & (0.001) & (0.175) &  & (0.128) & (0.000) &  & (0.062) & (0.187) & (0.056) \\
$\Pr(X_2=1|0)$ &0.964& 0.944 & 0.996 &  & 0.970 & 0.886 &  & 0.930 & 0.960 &  & 0.994 & 0.923 &  & 0.961 & 1.000 & 1.000 \\
 &0.088& (0.046) & (0.402) &  & (0.293) & (0.015) &  & (0.001) & (0.223) &  & (0.157) & (0.001) &  & (0.084) & (0.151) & (0.037) \\
 \hline
$\Pr(X_3=1|1)$ &0.585& 0.534 & 0.666 &  & 0.581 & 0.615 &  & 0.547 & 0.295 &  & 0.705 & 0.527 &  & 0.617 & 0.625 & 0.500 \\
 &(0.004)& (0.004) & (0.003) &  & (0.004) & (0.005) &  & (0.005) & (0.002) &  & (0.003) & (0.005) &  & (0.003) & (0.004) & (0.006) \\
$\Pr(X_3=1|0)$ &0.898& 0.890 & 0.907 &  & 0.904 & 0.859 &  & 0.865 & 0.913 &  & 0.919 & 0.889 &  & 0.911 & 0.885 & 0.830 \\
 &(0.008)& (0.010) & (0.029) &  & (0.009) & (0.016) &  & (0.008) & (0.016) &  & (0.008) & (0.010) &  & (0.006) & (0.009) & (0.009) \\
  sample size &1270&740 & 525&& 1022&81&& 463&535&&578&194&&840&351&79\\
 \hline\hline
\end{tabular}
\begin{tablenotes}
\item Note: $X^*=1$ stands for \textit{negative} sentiment toward the LGBT population. $\Pr(X_j=1|1)$ and $\Pr(X_j=1|0)$ are $\Pr(X_j=1|X^*=1)$ and $\Pr(X_j=1|X^*=0)$, respectively, for $j=1,2,3.$ $X_1=1, X_2=1$, and $X_3=1$ represent affirmative answers to  ``support same-sex marriage", ``happy with LGB manager", and ``illegal to discriminate", respectively.
\item The results are estimated using the extreme estimator proposed in Section \ref{section: estimation methods}. Standard errors are bootstrapped 1000 times.
\end{tablenotes}
\end{threeparttable}
}
\end{center}
\end{table}

\begin{figure}[tbp]
\caption{Effects of demographics on homosexuality and antigay sentiment}
\label{fig:effects of demographics on sexuality and sentiment}\centering
\begin{subfigure}[t]{0.5\textwidth}
        \includegraphics[height=2.4in]{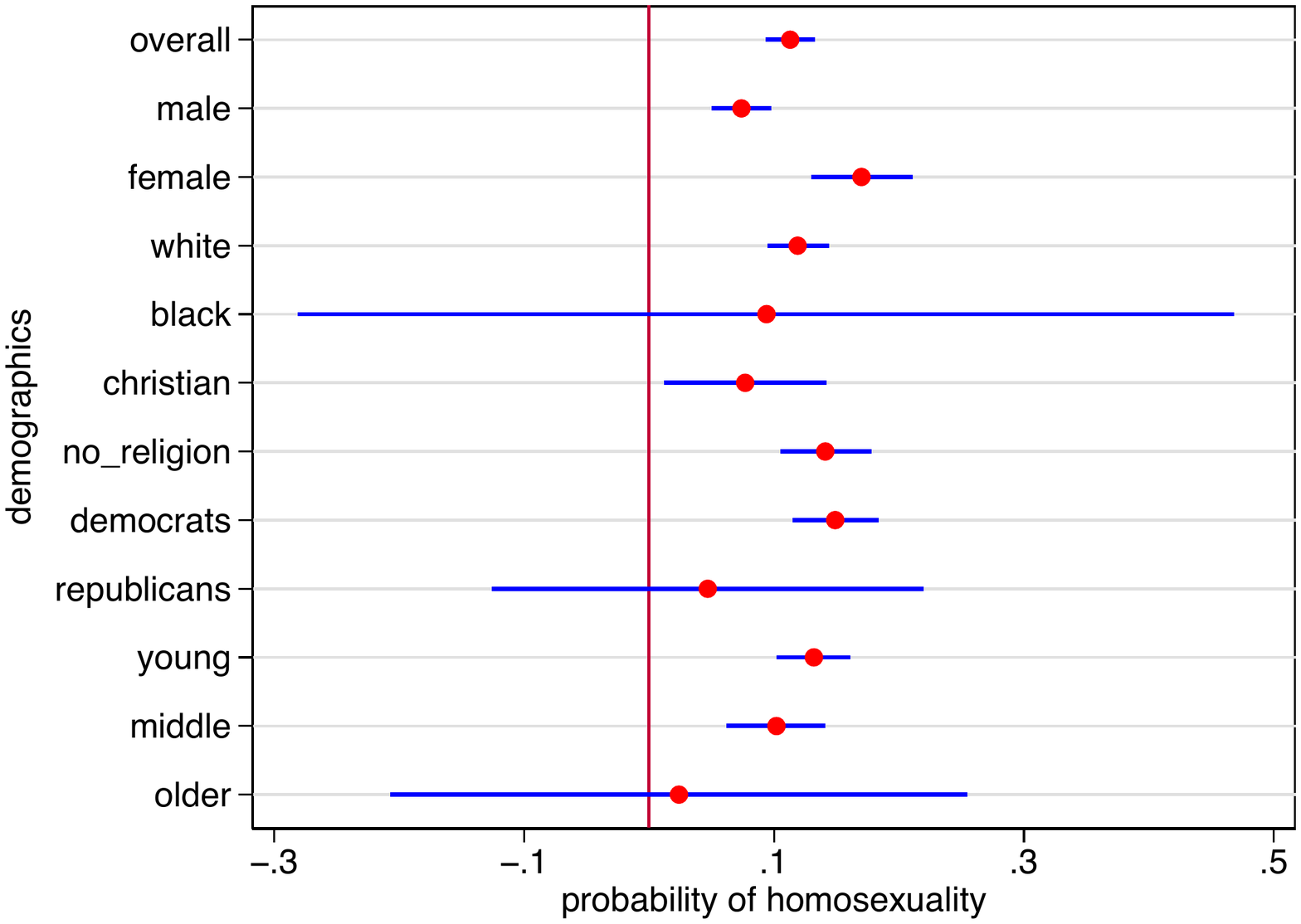}
        \caption{\scriptsize{homosexuality}}
    \end{subfigure}
\begin{subfigure}[t]{0.49\textwidth}
        \includegraphics[height=2.4in]{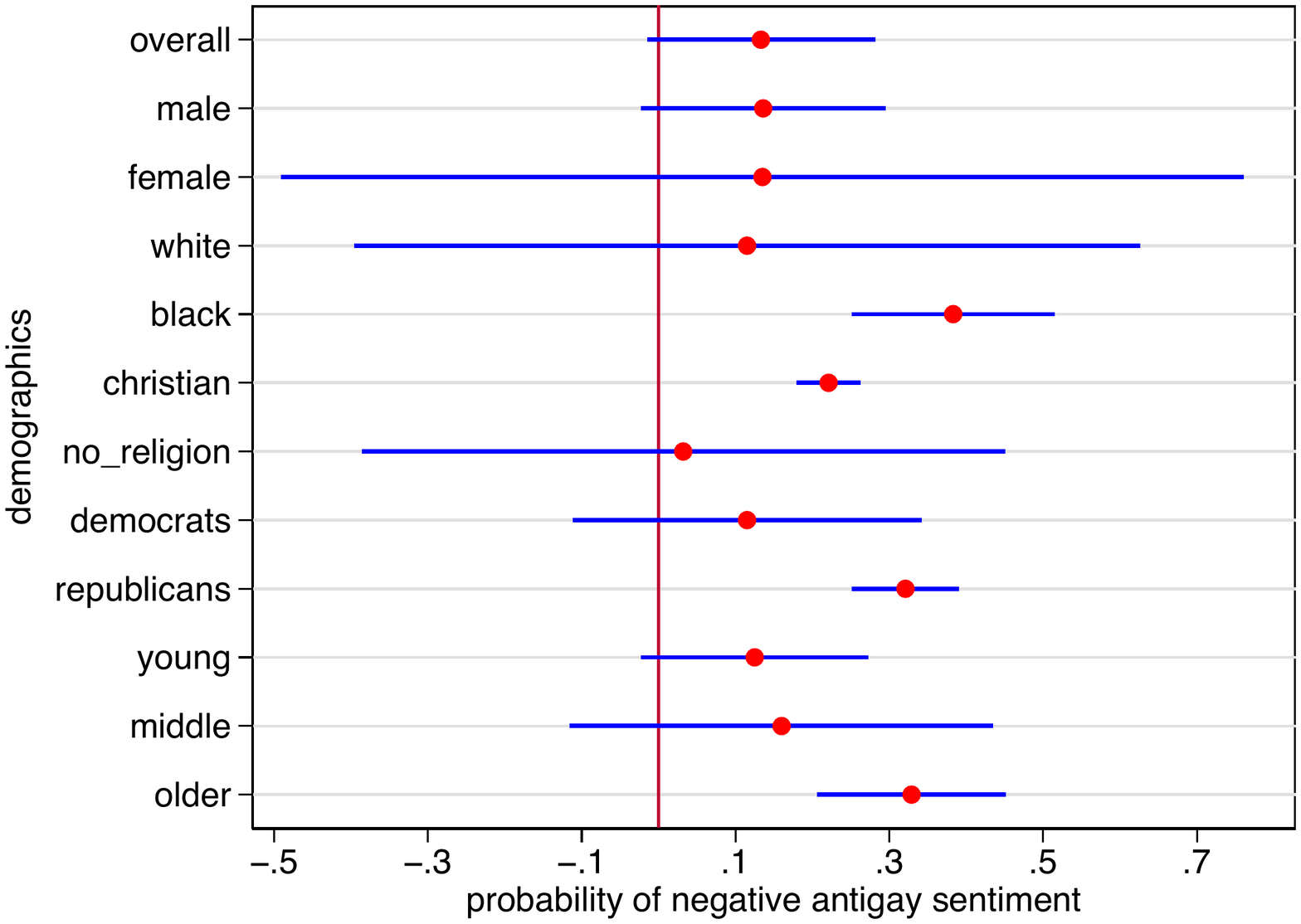}
       \caption{\scriptsize{antigay sentiment}}
    \end{subfigure}
\par
\vspace{0.3cm}
\par
\begin{minipage}{0.7\textwidth} % choose width suitably
{\footnotesize Note: The red dots and the blue lines represent point estimates and their 95\% confidence intervals, respectively.
\par}
\end{minipage}
\end{figure}

\section{Conclusions}
This article studies the eliciting of information from sensitive survey questions. We make two main points in the paper. First,
it is necessary to test the assumptions of the widely used LE before applying the method to obtain estimates about sensitive information. We prove that the assumptions of LE can be tested rigorously. We find that they are violated in the majority of empirical studies. That violation implies invalidity of LE and problematic conclusions based on using the method. Second,
information can be elicited from sensitive survey questions by applying our proposed technique. To implement it, one would ask all of the respondents to answer three or more survey questions related to the information to be elicited. Random assignment of groups is not necessary. Our technique also can address
measurement error in arbitrary forms, and it allows us to recover information at a disaggregated level. Applying our technique to survey data on the LGBT-population and LGBT-related sentiments leads to several novel findings that are obscured by using LE or other existing survey methods.

There are several avenues suggested for future research. One could apply our technique to other experimental methods with measurement error or unobserved heterogeneity: for example, the Goldberg paradigm experiments that are used to measure discrimination (see details in \cite{bertrand2017field}). Another possibility is to investigate how to optimally design survey questions and to collect respondents' demographics in order to obtain the best estimate from survey responses.

\bibliographystyle{aer}
\bibliography{new}

\bigskip

\bigskip

\bigskip

\bigskip

\noindent\textbf{{\LARGE {Appendix}}} \appendix

\numberwithin{equation}{section} \renewcommand\thesection{\Alph{section}}

%\numberwithin{proposition}{section} \numberwithin{proposition}{section}
%\numberwithin{theorem}{section} \numberwithin{proposition}{section}
\section{Proofs\label{appendix_section:proof}}

\paragraph{Proof of proposition \ref{proposition: model restrictions under randomization}.}
Under Assumption \ref{asp: random assignment}, $\Pr(R_0=j)=\Pr(R_1=j)\equiv P(j)$ is the probability of outcome $j\in\mathcal{J}_0$ under respondents' true preference. By Assumption \ref{asp: misreporting probability},  the probability of outcome $j\in\mathcal{J}_0$ in the control group is \begin{equation}
P_0(j)=(1-p_0){P(j)}+\frac{p_0}{J+1},
\end{equation}
or
\begin{equation}
P(j)=\frac{P_0(j)}{1-p_0}-\frac{p_0}{(J+1)(1-p_0)}.
\end{equation}
Consider the random variable $U\equiv R_1+X^*$, where $R_1$ and $X^*$ represent the outcomes for the nonsensitive and sensitive questions under respondents' true preference, respectively. According to Assumption \ref{asp: misreporting probability},  the probability of an outcome $j\in\mathcal{J}_1$ in treatment group can be obtained by convolution of probability distributions:
\begin{align}P_1(j)&=(1-p_0)\left(\sum_{j-r_1\in\{0,1\}}P(R_1=r_1)P(X^*=j-r_1)\right)+\frac{p_1}{J+2}.
\end{align}
When $j=0$,
\begin{align}
P_1(0)=(1-p_1)P(0)P(X^*=0)+\frac{p_1}{J+2}=\frac{1-p_1}{1-p_0}\bigg((1-\delta)P_0(0)-\frac{(1-\delta)p_0}{J+1}\bigg)+\frac{p_1}{J+2},
\end{align}
where $\delta=P(X^*=1)$. When $1\leq j<J+1$,
\begin{eqnarray}
P_1(j)&=&(1-p_1)\left(P(j)(1-\delta)+P(j-1)\delta\right)+\frac{p_1}{J+2}\nonumber\\
&=&(1-p_1)\left[\left(\frac{P_0(j)}{1-p_0}-\frac{p_0}{(J+1)(1-p_0)}\right)(1-\delta)\right.\nonumber\\
&+&\left.\left(\frac{P_0(j-1)}{1-p_0}-\frac{p_0}{(J+1)(1-p_0)}\right)\delta\right]+\frac{p_1}{J+2}\nonumber\\
&=&\frac{1-p_1}{1-p_0}\left(\delta P_0(j-1)+(1-\delta)P_0(j)-\frac{p_0}{J+1}\right)+\frac{p_1}{J+2}.
\end{eqnarray}
When $j=J+1$,\begin{align}
P_1(J+1)=(1-p_1)P(J)P(W=1)+\frac{p_1}{J+2}=\frac{1-p_1}{1-p_0}\bigg(\delta P_0(J)-\frac{\delta p_0}{J+1}\bigg)+\frac{p_1}{J+2}.
\end{align}
\paragraph{Proof of identification of $\delta, p_0$ and $p_1$ from equation (\ref{equation: mean difference of responses with misreporting}).} Without loss of generality, we assume $J=3$. The cases with $J>3$ can be proved analogously.

When $J=3$, all the model restrictions are
\begin{eqnarray}\label{eq: conditions for J=3}
P_1(0)&=&\frac{1-p_1}{1-p_0}\bigg((1-\delta)P_0(0)-\frac{(1-\delta)p_0}{4}\bigg)+\frac{p_1}{5},\nonumber\\
P_1(j)&=&\frac{1-p_1}{1-p_0}\left(\delta P_0(j-1)+(1-\delta)P_0(j)-\frac{p_0}{4}\right)+\frac{p_1}{5}, j=1,2,3,\nonumber\\
P_1(4)&=&\frac{1-p_1}{1-p_0}\bigg(\delta P_0(3)-\frac{\delta p_0}{4}\bigg)+\frac{p_1}{5},
\end{eqnarray}
where $0\leq P_t(j)\leq 1$,\hspace{0.2cm} $\sum_{j=0}^{J+t} P_t(j)=1$,  and $t=0, 1$.

Note that the five equations above are nonlinear in $\delta$, $p_0$ and $p_1$. Below we discuss sufficient but not necessary conditions under which Equation (\ref{eq: conditions for J=3}) sustains a solution. Since the equation is linear in $\delta$ when $j=1, 2, 3$, we use the equations for $j=1, 2, 3$ to get
\begin{eqnarray}
\frac{P_1(3)-P_1(2)}{P_1(2)-P_1(1)}=\frac{(1-\delta)P_0(3)+(2\delta-1)P_0(2)-\delta P_0(1)}{(1-\delta)P_0(2)+(2\delta-1)P_0(1)-\delta P_1(0)},
\end{eqnarray}
which identifies $\delta$ whenever $(P_1(3)-P_1(2))\big(2P_0(1)-P_0(2)-P_1(0)\big)\neq (P_1(2)-P_1(1))\big(2P_0(2)-P_0(3)-P_1(1)\big)$. We then combine the first and last equations ($j=0, 4$) and two other consecutive equations (e.g., $j=1,2$ or $j=2,3$) to obtain
\begin{eqnarray}
\frac{P_1(4)-P_1(0)}{P_1(2)-P_1(1)}=\frac{\delta P_0(3)-(1-\delta)P_0(0)+(1-2\delta)p_0/4}{(1-\delta)P_0(2)+(2\delta-1)P_0(1)-\delta P_1(0)}.
\end{eqnarray}
The parameter $p_0$ is identified if $\delta\neq 1/2$. For a given identified pair $(\delta, p_0)$, any condition in Equation (\ref{eq: conditions for J=3}) identifies $p_1$. The identified parameters under the sufficient conditions we provided are not necessarily in $[0,1]$. Therefore, Equation (\ref{eq: conditions for J=3}) sustains at most one solution.
Our argument above can be generalized to $J>3$.

\paragraph{Proof of Corollary 1.}
Starting from equation (\ref{equation: relationship between observed responses with misreporting}), we first calculate $\mathbb{E}(Y_1)$.
\begin{eqnarray}
\mathbb{E}(Y_1) &=& \sum_{j=0}^{J+1}P_1(j)j\nonumber\\
                &=&0*\left\{ \frac{1-p_1}{1-p_0}\bigg((1-\delta)P_0(0)-\frac{(1-\delta)p_0}{J+1}\bigg)+\frac{p_1}{J+2}\right\}\nonumber\\
                &&+1*\left\{\frac{1-p_1}{1-p_0}\left(\delta P_0(0)+(1-\delta)P_0(1)-\frac{p_0}{J+1}\right)+\frac{p_1}{J+2}\right\}\nonumber\\
                &&\vdots\nonumber\\
                &&+J*\left\{\frac{1-p_1}{1-p_0}\left(\delta P_0(J-1)+(1-\delta)P_0(J)-\frac{p_0}{J+1}\right)+\frac{p_1}{J+2}\right\}\nonumber\\
                &&+(J+1)*\left\{\frac{1-p_1}{1-p_0}\bigg(\delta P_0(J)-\frac{\delta p_0}{J+1}\bigg)+\frac{p_1}{J+2}\right\}\nonumber\\
                &=&\frac{1-p_1}{1-p_0}\left\{\delta P_0(0)+(1+\delta)P_0(1)+\cdots (J+\delta)P_0(J)-(1+2+\cdots+J)\frac{p_0}{J+1}+\delta p_0\right\}\nonumber\\
                &&+(1+2+\cdots+J+1)\frac{p_1}{J+2}\nonumber\\
                &=&\frac{1-p_1}{1-p_0}\left\{\sum_{j=0}^{J}P_0(j)j+\delta\sum_{j=0}^{J}P_0(j)-\frac{Jp_0}{2}-\delta p_0\right\}+\frac{(J+1)p_1}{2}\nonumber\\
                &=&\frac{1-p_1}{1-p_0}\left\{\mathbb{E}(Y_0)+\delta(1-p_0)-\frac{Jp_0}{2}\right\}+\frac{(J+1)p_1}{2}.
\end{eqnarray}
Based on the expression above, we have
\begin{eqnarray}
\mathbb{E}(Y_1) -\mathbb{E}(Y_0)
                &=&\frac{1-p_1}{1-p_0}\left\{\mathbb{E}(Y_0)-\frac{1-p_0}{1-p_1}\mathbb{E}(Y_0)+
                \delta(1-p_0)-\frac{Jp_0}{2}\right\}+\frac{(J+1)p_1}{2}\nonumber\\
                &=&\delta(1-p_1)+\frac{p_0-p_1}{1-p_0}\mathbb{E}(Y_0)-\frac{J(1-p_1)p_0}{2(1-p_0)}+\frac{J+1}{2}p_1.
\end{eqnarray}

\section{An Extension: Strategic Misreporting\label{appendix: an alternative misreporting strategy}}
In this section, we discuss an alternative misreporting strategy and its implications.
We assume that all the respondents without sensitive information answer survey questions truthfully regardless of their group assignment. Respondents with sensitive information misreport with probability $p$ only if truthful reporting would cause privacy disclosure (in our case, respondents may cause privacy disclosure when $R_1=J$). When a respondent with sensitive information chooses to misreport, we assume that she answers no to the sensitive question.
Following the notation in Proposition \ref{proposition: model restrictions under randomization}, the assumptions of strategic misreporting can be summarized as:
\begin{eqnarray}\Pr(X^*=1|R_1<J)=\delta, \hspace{0.2cm}\Pr(X^*=1|R_1=J)=\delta(1-p).
\end{eqnarray}
The assumption that all the respondents without sensitive information report truthfully implies that
\begin{eqnarray}\Pr(R_0=j)=\Pr(R_1=j)= P(j), \hspace{0.2cm} j\in\mathcal{J}_0.
\end{eqnarray}
Under the assumption above, only the outcomes $J$ and $J+1$ would be affected by misreporting. For $j=0, 1, \cdots, j-1$, no misreporting is involved, we have
\begin{eqnarray}P_1(j)&=&\sum_{j-r_1\in\{0,1\}}\Pr(R_1=r_1)\Pr(X^*=j-r_1)\nonumber\\
                                      &=&\begin{cases}
(1-\delta)P_0(0),& j=0. \\
(1-\delta)P_0(j)+\delta P_0(j-1)& j=1, 2, \cdots, j-1.
\end{cases}
\label{equation: alternative strategy part 1}
\end{eqnarray}
When $j=J$, the outcome is determined by both truthful and mis-reporting because those who misreport choose to report $j=J$,
    \begin{eqnarray}\label{equation: alternative strategy part 2}
       P_1(J)=(1-\delta)P_0(J)+\delta P(J-1)+ \delta p P_0(J).
    \end{eqnarray}
The outcome $j=J+1$ is observed only when respondents with sensitive information answer yes to all the $J$ nonsensitive questions under their true preference and decide to reveal their sensitive information even though doing so would discloses their privacy. Therefore, we have
    \begin{eqnarray}\label{equation: alternative strategy part 3}
        P_1(J+1)=\delta (1-p)P_0(J).
    \end{eqnarray}
Equations (\ref{equation: alternative strategy part 1})-(\ref{equation: alternative strategy part 3}) summarize all the model restrictions under the strategic misreporting. The parameters $\delta$ and $p$ are both identified given $J\geq 1$. The proof is similar to that of equation (\ref{equation: mean difference of responses with misreporting}), thus is omitted.

 Based on the model restrictions (\ref{equation: alternative strategy part 1})-(\ref{equation: alternative strategy part 3}), we can test LE with this misreporting strategy by using a testing procedure that is similar to Section \ref{subsection: test misreporting behavior}. The testing results of the five articles as in Section \ref{section: application of the test} are presented in Table \ref{table: testing the assumption of LE alt}. The findings from the test are similar to that in Section \ref{section: application of the test}. When $p\neq 0$, we reject the null hypothesis for 12 and 13 out of 19 questions at the 5\% and 10\% significance levels, respectively, accounting for 63\% and 68\%. By design, the testing results for $p=0$ is the same as that for $p_1=p_2=0$ in Section \ref{section: application of the test}. The model is rejected for 53\% (10 out of 19) and 63\% (12 out of 19) of questions
at the 5\% and 10\% significance level, respectively, in both specifications.

\begin{table}
\centering
\caption{Results of estimation and testing, strategic misreporting}
\vspace{-0.8cm}
\label{table: testing the assumption of LE alt}
\begin{center}
\scalebox{0.65}{\begin{threeparttable}
\begin{tabular}{llllllll}
\hline\hline
  & & &\multicolumn{2}{c}{GMM estimate} &&\multicolumn{2}{c}{testing results} \\
\cline{4-5} \cline{7-8}\vspace{-0.4cm}
\\
sensitive question&\tabincell{c}{sample\\ size}  & \tabincell{c}{mean \\ difference}& $\delta$& $p$  &&\tabincell{c} {$p\neq 0$}&\tabincell{c} {$p=0$}\\
\hline
favorable view of CCP &1576 & 0.057 & --- & --- && \xmark & \xmark\\
& & (0.057) & & &  \\
consider self Hong Kongese&1576 & 0.816 &---&---
&&    \xmark&\xmark\\
& & (0.048)& \\
support for HK independence &1576 & 0.521 & 0.521 & 0.209  && $\checkmark$ &\xmark\\
& & (0.052) & (0.052) & (0.303) &  \\
\tabincell{l}{support violence in pursuit of\\  HK's political rights} &1576 & 0.389 & 0.423 & --- && \xmark &$\checkmark$ \\
& & (0.048) & (0.052) & &  \\
\hline
\tabincell{l}{I completely trust the \\central government of China}&1807 & 0.290 &---&---
&& \xmark & \xmark \\
& & (0.038)  \\
 \hline
\tabincell{l}{do you consider yourself to be heterosexual?}&2516 & 0.894 &---&---
&& \xmark & $\dagger$\\
& & (0.036) & & &  \\
\tabincell{l}{are you sexually attracted to members of \\the same sex?}&2516 & 0.258 &---&---
&& $\dagger$& $\dagger$\\
& & (0.035)  \\
\tabincell{l}{ have you had a sexual experience with \\ someone of the same sex?}&2516
& 0.545 &---&--- && \xmark  &\xmark \\
& & (0.040)  \\
\tabincell{l}{do you think marriages between gay and \\ lesbian couples should be recognized by \\ the  law as valid, with the same rights \\as heterosexual marriages?}&2516 & 0.986&---&---
 && \xmark & \xmark \\
& & (0.032)  \\
\tabincell{l}{ would you be happy to have an openly \\lesbian, gay, or bisexual manager at work?}&2516 & 0.912 &---&---
&& \xmark & \xmark \\
& & (0.038)  \\
\tabincell{l}{do you believe it should be \\ illegal to discriminate  in hiring\\ based on someone's sexual orientation?}&2516 & 0.806
&---&---
 && \xmark &\xmark  \\
& & (0.031)  \\
\tabincell{l}{do you believe lesbians and gay men \\ should be allowed to adopt children?}&2516 & 0.878 & --- & ---  && \xmark & \xmark \\
& & (0.034)  \\
\tabincell{l}{do you think someone who is \\ homosexual can change their sexual\\ orientation if they choose to do so?}&2516 & 0.186 & 0.187 & --- && $\checkmark$ &$\checkmark$ \\
 && (0.034) & (0.031) &  &  \\
\hline
\tabincell{l}{w/o smartcards system, members\\ of this household have been asked\\ by officials to lie about the\\ amount of work they did on NREGS} &917 & 0.044 &0.014 & --- && $\checkmark$&$\checkmark$\\
 && (0.059) & (0.072) & & \\
 \tabincell{l}{w/o smartcards system, members\\ of this household have been given\\ the chance to meet with the CM of AP\\ to discuss problems with NREGS?}&897 & 0.174 & 0.133 & 0.275 && $\checkmark$  &\xmark \\
& & (0.062) & (0.088) & (0.233) &  \\
\tabincell{l}{w/ smartcards system, members\\ of this household have been asked\\ by officials to lie about the\\ amount of work they did on NREGS}&2300 & 0.106 &---&---
&&\xmark &\xmark \\
& & (0.036)  \\
\tabincell{l}{w/ smartcards system, members\\ of this household have been given\\ the chance to meet with the CM of AP\\ to discuss problems with NREGS?}&2276 & 0.154 & 0.140 & ---  && $\checkmark$ & $\checkmark$  \\
 && (0.037) & (0.053) &  &  \\
\hline
\tabincell{l}{treated voters differently by religion/caste} &3833&	0.242&	0.248&---&&	
$\checkmark$  &$\checkmark$ \\
&& (0.027) & (0.029) &  &  \\
\tabincell{l}{attempted to influence voting}&3850	&0.142	&---&--- &&\xmark &\xmark \\
& & (0.038)  \\
 \hline\hline
\end{tabular}
\begin{tablenotes}
\item Note: The five panels (from top to bottom) are results for \cite{cantoni2019protests}, \cite{chen2019impact}, \cite{coffman2017size}, \cite{muralidharan2016building}, and \cite{neggers2018enfranchising}, respectively.
\item \xmark, $\dagger$, and $\checkmark$ indicate $p<0.05$, $p<0.1$, and $p>0.1$, respectively. Standard errors in parentheses are bootstrapped $1000$ times. When only $\hat{\delta}$ is provided, then $p=0$; if both $\hat{\delta}$ and $\hat{p}$ are provided, then $p\ne 0$.
\end{tablenotes}
\end{threeparttable}
}
\end{center}
\end{table}

\section{Estimation\label{section: estimation methods}}
In this section, we propose three methods to estimate the model primitives identified in Theorem \ref{theorem: results of nonparametric identification}. The first two methods are nonparametric and follow the identification procedure closely. The third one utilizes maximum likelihood estimation (MLE) to estimate the model primitives.  The data sample is represented by $\{X_{ji}, Z_i\}, j=1, 2, 3, i=1, 2, \cdots, n$, based on the $n$ respondents' answers to three questions ($X_{ji}$) and their characteristics ($Z_i$).
\subsection{Discrete covariates}
When the vector of covariates $Z$ takes only a few discrete values, such as gender, political affiliation, and race, it is convenient to estimate the model using a nonparametric approach based on the identification procedure. According to the main identification equation (\ref{equation: main identification equation}),  the eigenvector matrix $M_{X_1|X^*,z}$ can be written as a function of the observed matrix $M_{X_1, x_2,X_3|z}M^{-1}_{X_1, X_3|z}$,
\begin{eqnarray*}
M_{X_1|X^*,z}=\psi(M_{X_1, x_2,X_3|z}M^{-1}_{X_1, X_3|z}),
\end{eqnarray*}
where $\psi(\cdot)$ is a function determined by eigenvalue-eigenvector decomposition of a matrix.\footnote{Although the explicit expression of $\psi(\cdot)$ is complicated, a general
result in \cite{andrew1993derivatives} shows that it is analytic. The analytical properties of $\psi(\cdot)$ guarantees the regular asymptotic properties of the nonparametric estimators in this section.}
Thus, we estimate $M_{X_1|X^*,z}$ as
\begin{eqnarray}\label{equation: nonparametric estimation of eigenvector matrix}
\widehat{M}_{X_1|X^*,z}=\psi(\widehat{M}_{X_1, x_2,X_3|z}\widehat{M}^{-1}_{X_1, X_3|z}),
\end{eqnarray}
where $\widehat{M}_{X_1, x_2,X_3|z}$ and $\widehat{M}^{-1}_{X_1, X_3|z}$ can be estimated by
\small
\begin{eqnarray}
\left(\widehat{M}_{X_1, x_2,X_3|z}\right)_{l,k}=\sum\nolimits_{i=1}^{n}  \mathbbm{1}\left(X_{1i}=l-1, X_{2i}=x_2, X_{3i}=k-1\right)\mathbbm{1}\left(Z_i=z\right)/\sum\nolimits_{i=1}^{n}\mathbbm{1}\left(Z_i=z\right)
\end{eqnarray}
\normalsize
for $k, l=1, 2$. The matrix $\widehat{M}_{X_1, X_3|z}$ is estimated analogously.
Once $\widehat{M}_{X_1|X^*,z}$ is obtained, we follow equation (\ref{equation: estimation of the conditional probabilities of sensitive information}) to estimate $\Pr(X^*|z)\equiv [\Pr(X^*=0|z) \hspace{0.3cm} \Pr(X^*=1|z)]'$,
\begin{eqnarray}\label{equation: x star conditional on z}
\widehat{\Pr}(X^*|z)=\widehat{M}^{-1}_{X_1|X^*,z}\left[\widehat{\Pr}(X_1=0|z) \hspace{0.3cm} \widehat{\Pr}(X_1=1|z)\right]',
\end{eqnarray}
where $\widehat{\Pr}(X_1=j|z)=\sum\nolimits_{i=1}^{n}  \mathbbm{1}\left(X_{1i}=j, Z_i=z\right)/\sum\nolimits_{i=1}^{n}\mathbbm{1}\left(Z_i=z\right), j=0,1.$

The matrix $M_{X_3|X^*, z}$ can be estimated from equation (\ref{equation: joint distribution of two observations}) and the relationship
$M_{X_3,X^*|z}=M_{X_3|X^*,z}M_{X^*|z}$, where $M_{X^*|z}$ is a diagonal matrix and the two diagonal elements are $\Pr(X^*=0|z)$ and $\Pr(X^*=1|z)$. Thus, we have
\begin{eqnarray}
\widehat{M}_{X_3|X^*,z}=\left(\widehat{M}^{-1}_{X_1|X^*,z}\widehat{M}_{X_1,X_3|z}\right)^{\prime}\hbox{diag}\left[
1/\widehat{\Pr}(X^*=0| z) \hspace{0.3cm} 1/\widehat{\Pr}(X^*=1| z)\right].
\end{eqnarray}
The main advantage of the nonparametric estimation is that it is global and involves no optimization. However, the estimated probability might be outside the interval $[0,1]$ due to finite sample properties. To address this potential issue, we propose an extreme estimator with probabilities constrained to $[0,1]$ based on equation (\ref{equation: main identification equation}).
Let $p_{1k}=\Pr(X_1=1|X^*=k, z)$ and $p_{2k}=\Pr(X_2=x_2|X^*=k, z)$ be the parameters to be estimated for $k=0, 1$. We assume that  Assumption \ref{assumption: invertibility, distinct eigenvalues, and ordering}(ii) holds for $X_1$ such that $p_{10}<p_{11}$. The estimation problem becomes
\begin{eqnarray}\label{equation: nonparametric estimation of eigenvector matrix}
\left(\widehat{p}_{10}, \widehat{p}_{11}, \widehat{p}_{20}, \widehat{p}_{21}\right)&=&\arg\min\left|\left|\widehat{M}_{X_1, x_2,X_3|z}\widehat{M}^{-1}_{X_1, X_3|z}-MDM^{-1}(p_{10}, p_{11}, p_{20}, p_{21},z)\right|\right|\nonumber\\
\hbox{subject to}&& \hspace{-0.3cm} (i)\hspace{0.1cm} (p_{10}, p_{11}, p_{20}, p_{21})\in[0,1]^4\nonumber\\
&&\hspace{-0.3cm} (ii)\hspace{0.1cm}  p_{20}\neq p_{21}\nonumber\\
&&\hspace{-0.3cm} (iii)\hspace{0.1cm} p_{10}<p_{11},
\end{eqnarray}
where the $\left|\left|\cdot\right|\right|$ is a matrix norm (e.g., Frobenius norm) and the constraints are imposed according to Assumption \ref{assumption: invertibility, distinct eigenvalues, and ordering}. Analogously, $\Pr(X^*|z)$ and $\Pr(X_3|X^*,z)$ are  estimated from equation (\ref{equation: estimation of the conditional probabilities of sensitive information}) and (\ref{equation: joint distribution of two observations}), respectively.

\subsection{Continuous convariates}
When the vector of characteristics $Z$ takes a large number of values or is continuous, the nonparametric estimation demands a large sample, which is often unavailable. In this section, we present an MLE method to estimate the latent probabilities.

The MLE is based on the main identification equation (\ref{equation: main identification equation}),
\begin{eqnarray*}
\Pr\left( X_{1}, X_2, X_3|z\right) =\sum\nolimits_{X^*\in\{0,1\}}\Pr\left( X_{1}| X^*,z\right)\Pr\left( X_{2}| X^*,z\right)\Pr\left( X_3|X^*,z\right)\Pr\left(X^*|z\right).
\end{eqnarray*}
We first parametrize the unknown probabilities.
\begin{eqnarray}\label{equation: parametrization of probabilities}
\Pr(X_1=1|X^*=j, z) &\equiv& g(z; \alpha_j), \hspace{0.3cm} \Pr(X_2=1|X^*=j, z) \equiv g(z; \beta_j),\nonumber\\
\Pr(X_3=1|X^*=j, z) &\equiv& g(z; \gamma_j), \hspace{0.3cm} \Pr(X^*=1| z) \equiv g(z; \rho), j=0, 1.
\end{eqnarray}
The $g(\cdot)$ can take various functional forms such as a logistic function. The model parameters $(\alpha_0, \alpha_1, \beta_0, \beta_1, \gamma_0, \gamma_1, \rho)$ are estimated by maximizing the following log-likelihood function.
\begin{eqnarray}
\log \mathcal{L}
&=&\sum\nolimits_{i=1}^n \log
\left\{\sum\nolimits_{X^*\in\{0,1\}}\Pr\left( X_{1i}| X^*,z_i\right)\Pr\left( X_{2i}| X^*,z_i\right)\Pr\left( X_{3i}|X^*,z_i\right)\Pr\left(X^*|z_i\right)\right\}.\nonumber\\
&=&\sum\nolimits_{i=1}^n \log
\bigg\{\sum\nolimits_{j\in\{0,1\}}\big[g(z_i; \alpha_j)\big]^{X_{1i}}\big[1-g(z_i; \alpha_j)\big]^{1-X_{1i}}\big[g(z_i; \beta_j)\big]^{X_{2i}}\big[1-g(z_i; \beta_j)\big]^{1-X_{2i}}\nonumber\\
&&\times
\big[g(z_i; \gamma_j)\big]^{X_{3i}}\big[1-g(z_i; \gamma_j)\big]^{1-X_{3i}}g(z_i;\rho)^j\big[1-g(z_i; \rho)\big]^{1-j}\textcolor{white}{\{\sum_{j\in\{0,1\}}}\hspace{-1.2cm}\bigg\},
\end{eqnarray}
Assumption \ref{assumption: invertibility, distinct eigenvalues, and ordering}(ii) requires $\Pr(X_1=1|X^*=0, z)<\Pr(X_1=1|X^*=1, z)$, which implies that
$
\alpha_0\neq \alpha_1, \beta_0\neq \beta_1, \gamma_0\neq \gamma_1,
$ and $g(z; \alpha_0)<g(z; \alpha_1)$ for all $z$ are needed for identification.

\section{Monte Carlo Experiments\label{section: Monte Carlos simulations}}
In this section, we provide Monte Carlo evidence to illustrate the performance of our proposed methodology.
We consider two settings where the covariate $Z$ is discrete and continuous, respectively. The sample size is set to be 500, 1,000, and 2,000, which is comparable to the sample size in the articles we analyze in section \ref{section: application of the test}. The results are based on 1,000 replications.

In the first setting, $Z$ is a binary variable with the probability of $Z=0$ being 0.4. We set parameters for $\Pr(X^*=1|z)$ and
$\Pr(X_j=1|x^*,z)$ to generate the sample $\{X_{1i}, X_{2i}, X_{3i}, Z_i\}, n=1, 2,\cdots, n.$ The first step of estimation is to test the rank of the $2\times 2$ matrix $M_{X_1, X_3|z}$ by using the sequential testing procedure proposed in \cite{robin2000tests}.
Specifically, we test
\begin{eqnarray*}
 H_0: rank\left(M_{X_1, X_3|z}\right) = 1 \hspace{0.3cm}vs. \hspace{0.3cm} H_1: rank\left(M_{X_1, X_3|z}\right) > 1.
\end{eqnarray*}
We reject the null hypothesis with a 100\% rejection rate for all the three sample sizes.\footnote{The rejection rate is 100\% for a sample size greater than 300.} Next, we estimate the parameters
$\Pr(X^*), \Pr(X^*|z)$, and $\Pr(X_j|x^*, z)$ using the method of  matrix decomposition and the extreme estimator with a Frobenius norm. The estimated results are presented in Table \ref{table: simulation results for discrete z}. As shown in the table, our estimates track the true value closely even for a small sample size $n=500$. As sample size increases, the standard error decreases significantly. The performance of the closed-form estimator and the extreme estimator is similar.

In the second setting, $Z$ is uniformly distributed on $[0,1]$. The data generating process is based on Equation (\ref{equation: parametrization of probabilities}) with $g(\cdot)$ being a logistic  function.
From the model primitives, we generate the joint distribution of $\{X_{1i}, X_{2i}, X_{3i}, Z_i\}$ for $i=1, 2, \cdots, n$. We present the estimation results in Table \ref{table: simulation results for continuous z}. The results show good performance of our estimation method. The estimates improve significantly when sample size increases from $500$ to $2,000$.

A comparison of the results using the nonparametric methods and MLE shows that the nonparametric approach performs better than MLE, especially when the sample size is relatively small, e.g., $n=500$. This is mainly due to the fact that the nonparametric method is global and the estimation involves no optimization. Specifically, a closed-form
estimator is global by construction. By contrast, an optimization algorithm, such as MLE, can only
guarantee a local maximum or minimum even when a global solution exists. The nonparametric approach also allows us to
analyze how parameters affect the estimate constructively while this can only be done numerically
for an estimator using optimization algorithms. In addition, a closed-form estimator is computationally
more convenient since most of the optimization algorithms involve iterations.

In the second set of simulation, we check the sensitivity of our estimates to the assumption of conditional independence, Assumption \ref{assumption: conditional independence}. We allow mutual correlation between $X_j$ and $X_k$ conditional on $X^*$ and $Z$ by assuming the correlation coefficients for any pair of $X$ is a constant $\sigma$.  Then we investigate how our estimate depends on $\sigma$. We present the results of estimation for a discrete $Z$ in Tables \ref{table: simulation results for discrete z, correlation 0.05}-\ref{table: simulation results for discrete z, correlation 0.20} and a continuous $Z$ in table \ref{table: simulation results for continuous z with correlation}. For a model with a discrete $Z$, the estimates for $\sigma=0.05$ and $\sigma=0.10$ are close to $\sigma=0$.   When $\sigma=0.05$, the estimates are very close to the true values, especially when the sample size is 2000. Not surprisingly, when the correlation coefficient increases to $\sigma=0.2$, the bias becomes larger in magnitude. Correlation leads to overestimation. The impact of the correlation is larger for a continuous $Z$. Table \ref{table: simulation results for continuous z with correlation} shows that the estimates are close to the no correlation case only for $\sigma=0.05$.

\begin{table}
\centering
\caption{Simulation results of nonparametric estimation: discrete $Z$}
\vspace{-0.6cm}
\label{table: simulation results for discrete z}
\begin{center}
\scalebox{0.8}{\begin{threeparttable}
\begin{tabular}{lllllllll}
\hline\hline
& &\multicolumn{3}{c}{estimate: closed-form} &&\multicolumn{3}{c}{estimate: extreme estimator} \\
\cline{3-5} \cline{7-9}\vspace{-0.4cm}
\\
parameter&true value &$n=500$ & $n=1000$ & $n=2000$ &  & $n=500$ & $n=1000$ & $n=2000$ \\
\hline
$\Pr(X^*=1)$ & 0.642 & 0.643 & 0.641 & 0.634 &  & 0.641 & 0.642 & 0.634 \\
 &  & (0.062) & (0.042) & (0.029) &  & (0.076) & (0.044) & (0.031) \\
$\Pr(X^*=1|z=0)$ & 0.378 & 0.377 & 0.376 & 0.379 &  & 0.382 & 0.377 & 0.379 \\
 &  & (0.102) & (0.071) & (0.050) &  & (0.103) & (0.071) & (0.052) \\
$\Pr(X^*=1|z=1)$ & 0.818 & 0.808 & 0.813 & 0.817 &  & 0.803 & 0.813 & 0.817 \\
 &  & (0.082) & (0.049) & (0.034) &  & (0.108) & (0.055) & (0.036) \\
 \hline
$\Pr(X_1=1|X^*=0,z=0)$ & 0.269 & 0.265 & 0.270 & 0.267 &  & 0.264 & 0.268 & 0.267 \\
 &  & (0.079) & (0.058) & (0.037) &  & (0.086) & (0.057) & (0.040) \\
$\Pr(X_1=1|X^*=0,z=1)$ & 0.310 & 0.288 & 0.303 & 0.304 &  & 0.297 & 0.305 & 0.304 \\
 &  & (0.191) & (0.112) & (0.073) &  & (0.151) & (0.105) & (0.073) \\
$\Pr(X_1=1|X^*=1,z=0)$ & 0.881 & 0.895 & 0.885 & 0.882 &  & 0.880 & 0.881 & 0.881 \\
 &  & (0.113) & (0.076) & (0.051) &  & (0.078) & (0.062) & (0.048) \\
$\Pr(X_1=1|X^*=1,z=1)$ & 0.900 & 0.904 & 0.903 & 0.900 &  & 0.904 & 0.902 & 0.900 \\
 &  & (0.040) & (0.025) & (0.017) &  & (0.046) & (0.030) & (0.018) \\
 \hline
$\Pr(X_2=1|X^*=0,z=0)$ & 0.269 & 0.266 & 0.270 & 0.268 &  & 0.267 & 0.269 & 0.267 \\
 &  & (0.061) & (0.048) & (0.032) &  & (0.065) & (0.047) & (0.034) \\
$\Pr(X_2=1|X^*=0,z=1)$ & 0.289 & 0.285 & 0.288 & 0.287 &  & 0.285 & 0.288 & 0.287 \\
 &  & (0.120) & (0.080) & (0.055) &  & (0.125) & (0.084) & (0.055) \\
$\Pr(X_2=1|X^*=1,z=0)$ & 0.731 & 0.735 & 0.734 & 0.732 &  & 0.734 & 0.735 & 0.733 \\
 &  & (0.085) & (0.062) & (0.041) &  & (0.083) & (0.060) & (0.040) \\
$\Pr(X_2=1|X^*=1,z=1)$ & 0.750 & 0.752 & 0.752 & 0.750 &  & 0.755 & 0.753 & 0.751 \\
 &  & (0.038) & (0.026) & (0.019) &  & (0.053) & (0.031) & (0.019) \\
 \hline
$\Pr(X_3=1|X^*=0,z=0)$ & 0.269 & 0.262 & 0.266 & 0.267 &  & 0.259 & 0.264 & 0.267 \\
 &  & (0.077) & (0.056) & (0.037) &  & (0.071) & (0.050) & (0.036) \\
$\Pr(X_3=1|X^*=0,z=1)$ & 0.289 & 0.256 & 0.283 & 0.284 &  & 0.285 & 0.285 & 0.283 \\
 &  & (0.212) & (0.107) & (0.076) &  & (0.169) & (0.116) & (0.087) \\
$\Pr(X_3=1|X^*=1,z=0)$ & 0.881 & 0.899 & 0.893 & 0.884 &  & 0.880 & 0.888 & 0.883 \\
 &  & (0.110) & (0.080) & (0.050) &  & (0.100) & (0.069) & (0.052) \\
$\Pr(X_3=1|X^*=1,z=1)$ & 0.891 & 0.896 & 0.893 & 0.892 &  & 0.892 & 0.893 & 0.892 \\
 &  & (0.045) & (0.026) & (0.018) &  & (0.062) & (0.030) & (0.018) \\
 \hline\hline
\end{tabular}
\begin{tablenotes}
\item
\end{tablenotes}
\end{threeparttable}
}
\end{center}
\end{table}

\begin{table}
\centering
\caption{Simulation results: continuous $Z$}
\vspace{-0.6cm}
\label{table: simulation results for continuous z}
\begin{center}
\scalebox{1}{\begin{threeparttable}
\begin{tabular}{lllll}
\hline\hline
& &\multicolumn{3}{c}{estimate} \\
\cline{3-5}\vspace{-0.4cm}
\\
parameter&true value &$n=500$ & $n=1000$ & $n=2000$ \\
\hline
$\rho$ & 1 & 0.964 & 0.989 & 0.985 \\
 &  & (0.396) & (0.273) & (0.193) \\
$\alpha_1$ & 1 & 1.021 & 1.012 & 1.008 \\
 &  & (0.267) & (0.188) & (0.126) \\
$\alpha_0$ & -1 & -1.031 & -1.018 & -0.990 \\
 &  & (0.469) & (0.292) & (0.201) \\
$\beta_1$ & 2 & 2.272 & 2.050 & 2.036 \\
 &  & (1.575) & (0.360) & (0.249) \\
$\beta_0$ & -2 & -2.494 & -2.144 & -2.044 \\
 &  & (2.859) & (0.723) & (0.436) \\
$\gamma_1$ & 2 & 2.286 & 2.050 & 2.030 \\
 &  & (2.235) & (0.378) & (0.240) \\
$\gamma_0$ & -2 & -2.446 & -2.107 & -2.031 \\
 &  & (2.099) & (0.694) & (0.443) \\
 \hline\hline
\end{tabular}
\begin{tablenotes}
\item
\end{tablenotes}
\end{threeparttable}
}
\end{center}
\end{table}

\begin{table}
\centering
\caption{Simulation results of nonparametric estimation: discrete $Z$, $\sigma=0.05$}
\vspace{-0.6cm}
\label{table: simulation results for discrete z, correlation 0.05}
\begin{center}
\scalebox{0.8}{\begin{threeparttable}
\begin{tabular}{lllllllll}
\hline\hline
& &\multicolumn{3}{c}{estimate: closed-form} &&\multicolumn{3}{c}{estimate: extreme estimator} \\
\cline{3-5} \cline{7-9}\vspace{-0.4cm}
\\
parameter&true value &$n=500$ & $n=1000$ & $n=2000$ &  & $n=500$ & $n=1000$ & $n=2000$ \\
\hline
$\Pr(X^*=1|z=0)$ & 0.378 & 0.394 & 0.387 & 0.389 &  & 0.397 & 0.388 & 0.389 \\
 &  & (0.097) & (0.068) & (0.045) &  & (0.093) & (0.067) & (0.045) \\
$\Pr(X*=1|z=1)$ & 0.818 & 0.814 & 0.817 & 0.819 &  & 0.813 & 0.816 & 0.819 \\
 &  & (0.074) & (0.046) & (0.031) &  & (0.073) & (0.046) & (0.031) \\
$\Pr(X^*=1)$ & 0.642 & 0.659 & 0.648 & 0.647 &  & 0.660 & 0.648 & 0.647 \\
 &  & (0.060) & (0.039) & (0.026) &  & (0.059) & (0.039) & (0.026) \\
$\Pr(X_1=1|X^*=0,z=0)$ & 0.269 & 0.246 & 0.254 & 0.254 &  & 0.248 & 0.255 & 0.254 \\
 &  & (0.076) & (0.051) & (0.035) &  & (0.075) & (0.051) & (0.035) \\
$\Pr(X_1=1|X^*=0,z=1)$ & 0.310 & 0.273 & 0.289 & 0.293 &  & 0.279 & 0.289 & 0.293 \\
 &  & (0.165) & (0.101) & (0.070) &  & (0.143) & (0.099) & (0.070) \\
$\Pr(X_1=1|X^*=1,z=0)$ & 0.881 & 0.896 & 0.889 & 0.886 &  & 0.884 & 0.887 & 0.886 \\
 &  & (0.108) & (0.066) & (0.045) &  & (0.082) & (0.061) & (0.044) \\
$\Pr(X_1=1|X^*=1,z=1)$ & 0.900 & 0.903 & 0.903 & 0.901 &  & 0.903 & 0.904 & 0.901 \\
 &  & (0.039) & (0.024) & (0.016) &  & (0.037) & (0.024) & (0.016) \\
$\Pr(X_2=1|X^*=0,z=0)$ & 0.269 & 0.240 & 0.243 & 0.245 &  & 0.242 & 0.244 & 0.245 \\
 &  & (0.062) & (0.043) & (0.030) &  & (0.065) & (0.043) & (0.030) \\
$\Pr(X_2=1|X^*=0,z=1)$ & 0.289 & 0.259 & 0.262 & 0.265 &  & 0.258 & 0.262 & 0.265 \\
 &  & (0.110) & (0.076) & (0.053) &  & (0.110) & (0.076) & (0.053) \\
$\Pr(X_2=1|X^*=1,z=0)$ & 0.731 & 0.719 & 0.721 & 0.719 &  & 0.717 & 0.720 & 0.719 \\
 &  & (0.083) & (0.058) & (0.039) &  & (0.082) & (0.058) & (0.039) \\
$\Pr(X_2=1|X^*=1,z=1)$ & 0.750 & 0.743 & 0.746 & 0.744 &  & 0.744 & 0.746 & 0.744 \\
 &  & (0.039) & (0.026) & (0.019) &  & (0.045) & (0.026) & (0.019) \\
$\Pr(X_3=1|X^*=0,z=0)$ & 0.269 & 0.238 & 0.242 & 0.245 &  & 0.235 & 0.242 & 0.245 \\
 &  & (0.074) & (0.050) & (0.034) &  & (0.070) & (0.049) & (0.034) \\
$\Pr(X_3=1|X^*=0,z=1)$ & 0.289 & 0.240 & 0.259 & 0.261 &  & 0.254 & 0.259 & 0.261 \\
 &  & (0.180) & (0.105) & (0.067) &  & (0.144) & (0.103) & (0.067) \\
$\Pr(X_3=1|X^*=1,z=0)$ & 0.881 & 0.872 & 0.875 & 0.868 &  & 0.866 & 0.873 & 0.868 \\
 &  & (0.102) & (0.068) & (0.047) &  & (0.083) & (0.063) & (0.046) \\
$\Pr(X_3=1|X^*=1,z=1)$ & 0.891 & 0.889 & 0.888 & 0.888 &  & 0.889 & 0.888 & 0.888 \\
 &  & (0.042) & (0.025) & (0.018) &  & (0.037) & (0.025) & (0.018) \\
 \hline\hline
\end{tabular}
\begin{tablenotes}
\item
\end{tablenotes}
\end{threeparttable}
}
\end{center}
\end{table}

\begin{table}
\centering
\caption{Simulation results of nonparametric estimation: discrete $Z$, $\sigma=0.10$}
\vspace{-0.6cm}
\label{table: simulation results for discrete z, correlation 0.10}
\begin{center}
\scalebox{0.8}{\begin{threeparttable}
\begin{tabular}{lllllllll}
\hline\hline
& &\multicolumn{3}{c}{estimate: closed-form} &&\multicolumn{3}{c}{estimate: extreme estimator} \\
\cline{3-5} \cline{7-9}\vspace{-0.4cm}
\\
parameter&true value &$n=500$ & $n=1000$ & $n=2000$ &  & $n=500$ & $n=1000$ & $n=2000$ \\
\hline
$\Pr(X^*=1|z=0)$ & 0.378 & 0.406 & 0.400 & 0.402 &  & 0.406 & 0.401 & 0.402 \\
 &  & (0.089) & (0.060) & (0.042) &  & (0.086) & (0.059) & (0.042) \\
$\Pr(X*=1|z=1)$ & 0.818 & 0.821 & 0.825 & 0.824 &  & 0.819 & 0.825 & 0.824 \\
 &  & (0.057) & (0.041) & (0.028) &  & (0.057) & (0.041) & (0.028) \\
$\Pr(X^*=1)$ & 0.642 & 0.650 & 0.660 & 0.658 &  & 0.819 & 0.825 & 0.824 \\
 &  & (0.050) & (0.034) & (0.024) &  & (0.057) & (0.041) & (0.028) \\
$\Pr(X_1=1|X^*=0,z=0)$ & 0.269 & 0.237 & 0.241 & 0.238 &  & 0.235 & 0.241 & 0.238 \\
 &  & (0.088) & (0.049) & (0.035) &  & (0.073) & (0.049) & (0.035) \\
$\Pr(X_1=1|X^*=0,z=1)$ & 0.310 & 0.261 & 0.266 & 0.276 &  & 0.267 & 0.267 & 0.276 \\
 &  & (0.159) & (0.106) & (0.071) &  & (0.131) & (0.103) & (0.071) \\
$\Pr(X_1=1|X^*=1,z=0)$ & 0.881 & 0.891 & 0.890 & 0.887 &  & 0.888 & 0.888 & 0.887 \\
 &  & (0.097) & (0.062) & (0.040) &  & (0.075) & (0.058) & (0.040) \\
$\Pr(X_1=1|X^*=1,z=1)$ & 0.900 & 0.904 & 0.903 & 0.902 &  & 0.904 & 0.903 & 0.902 \\
 &  & (0.031) & (0.023) & (0.015) &  & (0.033) & (0.023) & (0.015) \\
$\Pr(X_2=1|X^*=0,z=0)$ & 0.269 & 0.217 & 0.217 & 0.217 &  & 0.216 & 0.217 & 0.217 \\
 &  & (0.067) & (0.040) & (0.027) &  & (0.060) & (0.041) & (0.027) \\
$\Pr(X_2=1|X^*=0,z=1)$ & 0.289 & 0.232 & 0.236 & 0.238 &  & 0.232 & 0.236 & 0.238 \\
 &  & (0.102) & (0.072) & (0.050) &  & (0.102) & (0.073) & (0.050) \\
$\Pr(X_2=1|X^*=1,z=0)$ & 0.731 & 0.709 & 0.710 & 0.706 &  & 0.710 & 0.710 & 0.706 \\
 &  & (0.086) & (0.053) & (0.037) &  & (0.080) & (0.053) & (0.037) \\
$\Pr(X_2=1|X^*=1,z=1)$ & 0.750 & 0.736 & 0.734 & 0.735 &  & 0.737 & 0.734 & 0.735 \\
 &  & (0.036) & (0.027) & (0.018) &  & (0.040) & (0.027) & (0.018) \\
$\Pr(X_3=1|X^*=0,z=0)$ & 0.269 & 0.218 & 0.219 & 0.218 &  & 0.214 & 0.219 & 0.218 \\
 &  & (0.079) & (0.047) & (0.031) &  & (0.065) & (0.046) & (0.031) \\
$\Pr(X_3=1|X^*=0,z=1)$ & 0.289 & 0.223 & 0.225 & 0.232 &  & 0.234 & 0.227 & 0.232 \\
 &  & (0.149) & (0.101) & (0.065) &  & (0.129) & (0.098) & (0.065) \\
$\Pr(X_3=1|X^*=1,z=0)$ & 0.881 & 0.857 & 0.856 & 0.851 &  & 0.856 & 0.856 & 0.851 \\
 &  & (0.100) & (0.060) & (0.043) &  & (0.082) & (0.059) & (0.043) \\
$\Pr(X_3=1|X^*=1,z=1)$ & 0.891 & 0.885 & 0.882 & 0.882 &  & 0.885 & 0.882 & 0.882 \\
 &  & (0.035) & (0.025) & (0.017) &  & (0.034) & (0.025) & (0.017) \\
 \hline\hline
\end{tabular}
\begin{tablenotes}
\item
\end{tablenotes}
\end{threeparttable}
}
\end{center}
\end{table}

\begin{table}
\centering
\caption{Simulation results of nonparametric estimation: discrete $Z$, $\sigma=0.20$}
\vspace{-0.6cm}
\label{table: simulation results for discrete z, correlation 0.20}
\begin{center}
\scalebox{0.8}{\begin{threeparttable}
\begin{tabular}{lllllllll}
\hline\hline
& &\multicolumn{3}{c}{estimate: closed-form} &&\multicolumn{3}{c}{estimate: extreme estimator} \\
\cline{3-5} \cline{7-9}\vspace{-0.4cm}
\\
parameter&true value &$n=500$ & $n=1000$ & $n=2000$ &  & $n=500$ & $n=1000$ & $n=2000$ \\
\hline
$\Pr(X^*=1|z=0)$ & 0.378 & 0.428 & 0.427 & 0.428 &  & 0.428 & 0.427 & 0.428 \\
 &  & (0.070) & (0.049) & (0.034) &  & (0.069) & (0.049) & (0.034) \\
$\Pr(X*=1|z=1)$ & 0.818 & 0.829 & 0.832 & 0.831 &  & 0.827 & 0.831 & 0.831 \\
 &  & (0.050) & (0.034) & (0.024) &  & (0.052) & (0.035) & (0.024) \\
$\Pr(X^*=1)$ & 0.642 & 0.667 & 0.672 & 0.667 &  & 0.666 & 0.672 & 0.667 \\
 &  & (0.042) & (0.029) & (0.019) &  & (0.042) & (0.029) & (0.019) \\
$\Pr(X_1=1|X^*=0,z=0)$ & 0.269 & 0.206 & 0.208 & 0.209 &  & 0.206 & 0.208 & 0.209 \\
 &  & (0.063) & (0.044) & (0.031) &  & (0.063) & (0.044) & (0.031) \\
$\Pr(X_1=1|X^*=0,z=1)$ & 0.310 & 0.234 & 0.238 & 0.247 &  & 0.243 & 0.240 & 0.247 \\
 &  & (0.151) & (0.102) & (0.061) &  & (0.123) & (0.093) & (0.061) \\
$\Pr(X_1=1|X^*=1,z=0)$ & 0.881 & 0.895 & 0.891 & 0.888 &  & 0.892 & 0.890 & 0.888 \\
 &  & (0.071) & (0.048) & (0.033) &  & (0.065) & (0.048) & (0.033) \\
$\Pr(X_1=1|X^*=1,z=1)$ & 0.900 & 0.903 & 0.903 & 0.903 &  & 0.904 & 0.903 & 0.903 \\
 &  & (0.030) & (0.020) & (0.014) &  & (0.033) & (0.021) & (0.014) \\
$\Pr(X_2=1|X^*=0,z=0)$ & 0.269 & 0.155 & 0.156 & 0.155 &  & 0.155 & 0.156 & 0.155 \\
 &  & (0.050) & (0.034) & (0.025) &  & (0.050) & (0.034) & (0.025) \\
$\Pr(X_2=1|X^*=0,z=1)$ & 0.289 & 0.177 & 0.185 & 0.183 &  & 0.178 & 0.184 & 0.183 \\
 &  & (0.090) & (0.063) & (0.044) &  & (0.089) & (0.063) & (0.044) \\
$\Pr(X_2=1|X^*=1,z=0)$ & 0.731 & 0.680 & 0.678 & 0.679 &  & 0.680 & 0.678 & 0.679 \\
 &  & (0.069) & (0.050) & (0.034) &  & (0.069) & (0.050) & (0.034) \\
$\Pr(X_2=1|X^*=1,z=1)$ & 0.750 & 0.718 & 0.717 & 0.717 &  & 0.718 & 0.717 & 0.717 \\
 &  & (0.035) & (0.024) & (0.017) &  & (0.038) & (0.026) & (0.017) \\
$\Pr(X_3=1|X^*=0,z=0)$ & 0.269 & 0.157 & 0.158 & 0.159 &  & 0.156 & 0.158 & 0.159 \\
 &  & (0.059) & (0.039) & (0.028) &  & (0.057) & (0.039) & (0.028) \\
$\Pr(X_3=1|X^*=0,z=1)$ & 0.289 & 0.154 & 0.165 & 0.172 &  & 0.175 & 0.169 & 0.172 \\
 &  & (0.203) & (0.093) & (0.065) &  & (0.119) & (0.086) & (0.065) \\
$\Pr(X_3=1|X^*=1,z=0)$ & 0.881 & 0.826 & 0.824 & 0.822 &  & 0.825 & 0.824 & 0.822 \\
 &  & (0.079) & (0.050) & (0.037) &  & (0.076) & (0.050) & (0.037) \\
$\Pr(X_3=1|X^*=1,z=1)$ & 0.891 & 0.875 & 0.873 & 0.874 &  & 0.875 & 0.873 & 0.874 \\
 &  & (0.034) & (0.023) & (0.016) &  & (0.033) & (0.023) & (0.016) \\
 \hline\hline
\end{tabular}
\begin{tablenotes}
\item
\end{tablenotes}
\end{threeparttable}
}
\end{center}
\end{table}

\begin{table}
\centering
\caption{Simulation results: continuous $Z$ with correlation}
\vspace{-0.6cm}
\label{table: simulation results for continuous z with correlation}
\begin{center}
\scalebox{0.8}{\begin{threeparttable}
\begin{tabular}{lllllllllllll}
\hline\hline
& &\multicolumn{3}{c}{$\sigma=0.05$}&&\multicolumn{3}{c}{$\sigma=0.10$}&&\multicolumn{3}{c}{$\sigma=0.20$} \\
\cline{3-5}\cline{7-9}\cline{11-13}\vspace{-0.4cm}
\\
parameter&\tabincell{c}{true\\ value} &$500$ & $1000$ & $2000$&&$500$ & $1000$ & $2000$&&$500$ & $1000$ & $2000$\\
\hline
$\rho$ & 1 & 1.015 & 1.017 & 1.028 &  & 1.053 & 1.065 & 1.071 &  & 1.078 & 1.104 & 1.106 \\
 &  & (0.365) & (0.268) & (0.187) &  & (0.354) & (0.238) & (0.180) &  & (0.309) & (0.202) & (0.141) \\
$\alpha_1$ & 1 & 1.044 & 1.037 & 1.034 &  & 1.074 & 1.059 & 1.066 &  & 1.143 & 1.118 & 1.125 \\
 &  & (0.259) & (0.183) & (0.126) &  & (0.243) & (0.173) & (0.124) &  & (0.231) & (0.170) & (0.116) \\
$\alpha_0$ & -1 & -1.106 & -1.086 & -1.083 &  & -1.215 & -1.213 & -1.195 &  & -1.462 & -1.417 & -1.400 \\
 &  & (0.424) & (0.303) & (0.211) &  & (0.448) & (0.301) & (0.202) &  & (0.469) & (0.304) & (0.211) \\
$\beta_1$ & 2 & 2.068 & 1.997 & 1.954 &  & 1.976 & 1.896 & 1.882 &  & 1.835 & 1.761 & 1.757 \\
 &  & (0.889) & (0.342) & (0.212) &  & (0.671) & (0.289) & (0.208) &  & (0.465) & (0.269) & (0.176) \\
$\beta_0$ & -2 & -2.740 & -2.446 & -2.335 &  & -3.184 & -2.794 & -2.731 &  & -4.484 & -3.895 & -3.698 \\
 &  & (2.109) & (0.914) & (0.495) &  & (2.645) & (0.846) & (0.556) &  & (6.401) & (1.651) & (0.755) \\
$\gamma_1$ & 2 & 2.071 & 1.999 & 1.959 &  & 2.003 & 1.889 & 1.868 &  & 1.849 & 1.766 & 1.748 \\
 &  & (0.780) & (0.372) & (0.216) &  & (0.811) & (0.288) & (0.193) &  & (0.804) & (0.245) & (0.169) \\
$\gamma_0$ & -2 & -2.725 & -2.398 & -2.340 &  & -3.187 & -2.816 & -2.703 &  & -4.972 & -3.896 & -3.687 \\
 &  & (2.019) & (0.822) & (0.501) &  & (2.633) & (0.886) & (0.590) &  & (9.221) & (1.254) & (0.781) \\
 \hline\hline
\end{tabular}
\begin{tablenotes}
\item
\end{tablenotes}
\end{threeparttable}
}
\end{center}
\end{table}

\end{document}